\documentclass[letterpaper, 10pt, conference]{ieeeconf}      % Use this line for a4
\IEEEoverridecommandlockouts                              % This command is only
\usepackage{amsmath,amssymb,amsfonts}

\usepackage{amsthm}
\usepackage{graphicx}
\usepackage{cite}
\usepackage{xcolor}
\usepackage{algorithmic}
\usepackage{algorithm}
\usepackage[hidelinks]{hyperref}
\usepackage{lipsum}
\usepackage{svg}
\usepackage{comment}
\usepackage{url}
\newtheorem{theorem}{Theorem}
\newtheorem{lemma}{Lemma}
\newtheorem{proposition}{Proposition}
\newtheorem{definition}{Definition}
\newtheorem{assumption}{Assumption}
\newtheorem{corollary}{Corollary}
\newtheorem{remark}{Remark}

\newcommand{\R}{\mathbb{R}}
\newcommand{\C}{\mathbb{C}}

\DeclareMathOperator{\col}{col}
\DeclareMathOperator{\rank}{rank}

\DeclareMathOperator{\im}{im}

\title{\LARGE \bf
Data-Driven Unknown Input Reconstruction for MIMO Systems \\ with Convergence Guarantees
}

\author{Enno Breukelman, Takumi Shinohara, Joowon Lee, and Henrik Sandberg% <-this % stops a space
\thanks{This work was supported in part by the Wallenberg AI, Autonomous Systems and Software Program (WASP) funded by the Knut and Alice Wallenberg Foundation, and by the Swedish Research Council (grant 2023-04770).}% <-this % stops a space
\thanks{The authors are with KTH Royal Institute of Technology, School of Electrical Engineering and Computer Science,
Department of Decision and Control Systems, Malvinas väg 10, SE-100 44 Stockholm, Sweden.
\texttt{\{cebre, tashin, joowon, hsan\}@kth.se}}%%
}

\begin{document}

\maketitle
\thispagestyle{empty}
\pagestyle{empty}

%%%%%%%%%%%%%%%%%%%%%%%%%%%%%%%%%%%%%%%%%%%%%%%%%%%%%%%%%%%%%%%%%%%%%%%%%%%%%%%%
\begin{abstract}
% Existing data-driven inversion methods for MIMO systems either assume known inputs up to the prediction window or lack convergence guarantees tied to system-theoretic properties. 

%%%%% (Joowon) %%%%%
% This paper addresses data-driven reconstruction of unknown inputs to linear time-invariant (LTI) systems.
% Specifically, we propose a novel input estimator based on a constrained least-squares formulation over data Hankel matrices, splitting ...
% Unlike existing methods, our method does not require knowledge of the initial input trajectory
% and is applicable to multi-input multi-output (MIMO) systems as well.
% We show that the proposed estimator is stable if and only if all the invariant zeros of the system lie strictly inside the unit circle.
% This mirrors ...
% (numerical examples)
%%%%%%%%%%%%%%%%%%%%

In this paper, we consider data-driven reconstruction of unknown inputs for deterministic, linear time-invariant (LTI) multiple-input multiple-output (MIMO) systems. 
We propose a novel autoregressive estimator based on a constrained least-squares formulation over Hankel matrices, splitting the problem into an output-consistency constraint and an input-history-matching objective. 
Our method relies on previously recorded input–output data to represent the system, but does not require knowledge of the true input to initialize the algorithm.
We show that the proposed estimator is strictly stable if and only if all the invariant zeros of the trajectory-generating system lie strictly inside the unit circle, which can be verified purely from input and output data.
This mirrors existing results from model-based input reconstruction and closes the gap between model-based and data-driven settings. 
Lastly, we provide numerical examples to demonstrate the theoretical results.
\end{abstract}

\section{Introduction}
\label{sec:Intro}

Data-driven representations of linear time-invariant (LTI) systems have become an established topic, driven by developments in behavioral systems theory \cite{willems_note_2004} and subspace identification methods \cite{van_overschee_subspace_1996}. A central idea in this framework is that the input-output behavior of a dynamical system is represented not through state-space matrices or transfer functions, but through linear combinations of previously recorded input-output trajectories. This perspective has also significantly influenced controller design, enabling closed-loop synthesis without the explicit identification of system matrices \cite{markovsky_data-driven_2008, de_persis_formulas_2020, berberich_robust_2020}.

These data-driven formulations rely on Hankel matrices constructed from measured trajectories. 
Under the assumption of persistently excited inputs, strong connections can be drawn between model-based and behavioral approaches \cite{DBCbook}. 
This includes the ability to identify model properties from data, even when the inputs are not persistently exciting \cite{van_waarde_informativity_2023}.
Despite these emerging connections, some techniques remain rigorously proven only in the model-based setting, with their data-driven counterparts yet to be shown.

One example is the problem of left inversion, namely, the reconstruction of the unique input that generated a measured output. This particular problem is motivated by fault detection, cyber-attack estimation, and inversion-based control, and has foundational work in \cite{willsky_invertibility_1974, moylan_stable_1977}. Recent research continues to connect system properties with the ability to recover unknown inputs, such as \cite{bejarano_unknown_2009, loreto_strong_2023} in the continuous case. 
Model-based inversion and unknown input estimation methods for the discrete-time case are also well documented; see~\cite{gillijns_kalman_2007, sundaram_fault-tolerant_2012}. 
Key aspects of inversion methods are the system's initial condition as well as the existence and position of invariant zeros. 

Motivated by the rise of behavioral and data-driven methods, a growing body of work investigates left inversion directly from data matrices, bypassing explicit model identification. Data-driven estimation approaches that include state estimation in the presence of unknown inputs can be found in \cite{disaro_delayed_2024, disaro_equivalence_2025, mishra_data-driven_2025, turan_data-driven_2022}, while \cite{mishra_data-driven_2023, eun_data-driven_2023, lee_input-output_2025, shi_data-driven_2022} specifically address system invertibility and input reconstruction. These methods share a common structure: a previously recorded input-output trajectory is used to represent the system dynamics, while online measurements are used to estimate the current input. A key limitation of existing approaches is that, in addition to online output measurements, they require exact knowledge of the true inputs immediately preceding the estimation window, or at initialization. This assumption is unrealistic in practice, since it is precisely these inputs that one seeks to recover.

For the deterministic, i.e. noise-free, single-input single-output (SISO) case, \cite{lee_input-output_2025} address this limitation and provide a convergence proof for arbitrary initial conditions using transfer-function arguments. For multiple-input multiple-output (MIMO) systems, \cite{shi_data-driven_2022} propose an estimator by characterizing a general matrix inverse; however, they do so without providing a system-theoretic explanation of when and why the method succeeds. Consequently, an open problem remains for MIMO systems. No existing data-driven method guarantees convergence of the input reconstruction for arbitrary initial conditions while connecting these guarantees to the system-theoretic properties that govern model-based inversion.

Our main contributions are as follows:
\begin{enumerate}
\item We develop a novel, data-driven, autoregressive input estimation algorithm for MIMO systems that does not require knowledge of the initial input trajectory. 
\item We provide a rigorous convergence analysis linking the estimator's behavior to the invariant zeros of the trajectory-generating system, mirroring the structural assumptions of model-based inversion.
%if all invariant zeros are stable, the algorithm converges for any unknown initial input trajectory; if none are present, the correct input is recovered exactly. These guarantees 
\item We establish a necessary and sufficient condition which can be checked from input and output data, for all invariant zeros to lie strictly inside the unit circle.
\item We illustrate the theoretical results in numerical examples, highlighting the role of invariant-zero locations.
\end{enumerate}

\emph{Outline:}
In Section~\ref{sec:Prelim}, we review existing results on model-based and data-driven input reconstruction with known initial conditions. Section~\ref{sec:Input_Reconstr_unknown_IC} then presents our algorithm for estimating unknown inputs under arbitrary initial conditions, along with its convergence analysis. Numerical examples demonstrate our theoretical findings in Section~\ref{sec:NumericalExamples}, and we conclude the paper in Section~\ref{sec:Concl}.

\emph{Notation:}
$\mathbb{R}$ denotes the set of real numbers, $\mathbb{C}$ the complex numbers, and $\mathbb{N}$ the natural numbers; column vectors in $\mathbb{R}^n$ are denoted by small letters, e.g., $x$, and matrices in $\mathbb{R}^{n \times m}$ by capital letters, e.g., $A$. 
%With $\col(u, y)$ we denote the vertical concatenation of the two column vectors $u, y$. 
We use the notations $ \| x \|_2 $ and $ \| A \|_2 $ to denote the $ \ell_2 $ norm of a vector $ x $ and the induced $ \ell_2 $ norm of a matrix, respectively. By $\im(A)$, we denote the image, i.e., the column space of $A$, and by $\dim(\im(A))$, we denote the dimension of said image.
For a square matrix $A$, the set of eigenvalues (spectrum) and the maximum absolute eigenvalue (spectral radius) are represented by $\sigma(A)$ and $\rho(A)$, respectively.

\section{Preliminaries}
\label{sec:Prelim}

In this section, we collect existing results on model-based and data-driven system inversion.

\subsection{Model-based System Inversion}

Consider the discrete-time linear time-invariant system $\boldsymbol{S}$ with $u \in \R^m$, $y \in \R^p$ and $x \in \R^n$ satisfying
\begin{equation}\label{eq:SS_S}
  \boldsymbol{S}: \; \left\{\begin{aligned} x_{k+1} &= A x_{k} + B u_k, \\
  y_k &= C x_k + D u_k, \end{aligned} \right.
\end{equation}
where $A \in \R^{n \times n}$, $B \in \R^{n \times m}$, $C \in \R^{p \times n}$ and $D \in \R^{p \times m}$.
%The input-output behavior of $\boldsymbol{S}$ can also be analyzed using the steady-state transfer matrix
%\begin{equation}
%  G_S(z) = C(A - zI)^{-1}B + D.
%\end{equation}
%We are interested in finding an $L$-delay inverse, for which it holds that
%\begin{equation}
%  G_\boldsymbol{S}^{\text{inv}}(z)\, G_S(z) = \frac{1}{z^L}\, I_m,
%\end{equation}
%with the delay $L\leq n$, c.f. Section 3.2.1 in \cite{gillijns_kalman_2007}.
%
To design the inverse of $\boldsymbol{S}$, we begin by considering the stacked vector of $L+1$ outputs
\begin{equation}
  y_{k:k+L} = \begin{bmatrix} y_k^\top & y_{k+1}^\top & \cdots & y_{k+L}^\top \end{bmatrix}^\top,
\end{equation}
which can also be written as
\begin{equation}\label{eq:stacked_outputs}
  y_{k:k+L} = \mathcal{O}_L\, x_k + \mathcal{I}_L\, u_{k:k+L},
\end{equation}
where $\mathcal{O}_L$ denotes the \emph{observability matrix} of order $L$ and $\mathcal{I}_L$ the \emph{invertibility matrix} for the unknown input of order $L$.
These matrices are computed recursively using
\begin{equation*}
\mathcal{O}_L = \begin{bmatrix} C \\ \mathcal{O}_{L-1} A \end{bmatrix}, \quad
\mathcal{I}_L = \begin{bmatrix} D & 0 \\ \mathcal{O}_{L-1} B & \mathcal{I}_{L-1} \end{bmatrix},
\end{equation*}
with $\mathcal{O}_0 = C$ and $\mathcal{I}_0 = D$.
% Suppose $x_k=0$ such that $ y_{k:k+L} = \mathcal{I}_L\, u_{k:k+L}$. 
We characterize the existence of a delayed left inverse, i.e. the invertibility property, as the ability to recover $u_k$ uniquely from $y_{k:k+L}$ for some $L \in \mathbb{N}$, given a zero or known initial state $x_k$.
%Note that, in the assumed minimality case, $\mathcal{O}_n$ is of full column rank.
%To estimate the inputs from \eqref{eq:stacked_outputs}, we require an $L$-delay invertibility from system $\boldsymbol{S}$ in \eqref{eq:SS_S}.

\begin{lemma}[Chapter 2 in \cite{sundaram_fault-tolerant_2012}]\label{lemma:invertibility}
  The system $\boldsymbol{S}$ is invertible if and only if either of the following conditions holds:
  \begin{enumerate}
    \item[i)] There exists at least one $z \in \C$ for which
    \begin{equation*}
      %\rank[\boldsymbol{M}(z)] = 
      \rank\begin{bmatrix} A - zI & B \\ C & D \end{bmatrix} = n + m.
    \end{equation*}
    \item[ii)] There exists $L\leq n$ and $P \in \R^{m \times (L+1)p}$ for which
    \begin{equation*}
      P\, \mathcal{I}_L = \begin{bmatrix} I_m & 0 \end{bmatrix}.
    \end{equation*}
    %\item[iii)] There exists $L\leq n$ for which 
    %\begin{equation*}
    %  \rank[\mathcal{I}_L] = m + \rank[\mathcal{I}_{L-1}].
    %\end{equation*}
  \end{enumerate}
\end{lemma}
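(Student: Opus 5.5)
Invertibility here means that $u_k$ is uniquely determined by $y_{k:k+L}$ when $x_k=0$, for some $L$. By \eqref{eq:stacked_outputs} with $x_k=0$, this says exactly that $\mathcal{I}_L u_{k:k+L}=0$ forces $u_k=0$. That is equivalent to $\ker \mathcal{I}_L \subseteq \{u_{k:k+L}: u_k=0\}$, and equivalently to the row space of $\mathcal{I}_L$ containing the rows of $[I_m\ 0]$. So invertibility at delay $L$ is equivalent to the existence of $P$ with $P\mathcal{I}_L=[I_m\ 0]$. The only extra content in ii) is the bound $L\le n$.

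I would first prove that invertibility for some $L$ gives $L\le n$. Write $\mathcal{I}_L=\begin{bmatrix}\mathcal{I}_{L-1}&0\\ \ast& D\end{bmatrix}$ after reordering, using the alternative recursion that appends the last block row. The rank-difference sequence $r_L=\rank\mathcal{I}_L-\rank\mathcal{I}_{L-1}$ is then nondecreasing and bounded by $m$; this is the classical Sain--Massey argument. One checks that invertibility at delay $L$ is equivalent to $r_L=m$. The standard fact that $r_L$ stabilizes by $L=n$ then gives $L\le n$.

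I expect proving this stabilization to be the main obstacle. My first approach is a geometric one. Consider the subspaces $\mathcal{V}_j\subseteq\R^n$ of states reachable at time $j+1$ from $x_0=0$ with inputs producing zero output on $0..j$. These are nested (non-decreasing), so they stabilize within $n$ steps, and $r_j$ stabilizes at the same index.

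Next I would prove ii)$\Rightarrow$i) by contrapositive. Suppose the rank in i) is less than $n+m$ for every $z$. Then the Rosenbrock matrix $M(z)$ has a nonzero polynomial kernel vector $(x(z),u(z))$ with $u\not\equiv 0$; otherwise $(A-zI)x=0$ for all $z$ would force $x=0$. Expanding in powers of $z^{-1}$ via $x(z)=(zI-A)^{-1}Bu(z)$, we get $G(z)u(z)=0$, where $G(z)=C(zI-A)^{-1}B+D$. This relation holds as rational functions. Normalize $u(z)$ so that its leading Laurent coefficient in $z^{-1}$ is nonzero. This produces an input sequence with $u_0\neq0$ and zero output from $x_0=0$, which contradicts ii) for every $L$. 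Conversely, for i)$\Rightarrow$ii), full column rank of $M(z)$ at one $z$ implies normal rank $n+m$. Then $\rank G(z)=m$ generically, since $\rank M(z)=n+\rank G(z)$ whenever $z\notin\sigma(A)$. So $G$ has a rational left inverse, and a left inverse of the form $z^{-L}$ times a proper inverse yields a Markov-parameter matrix $P$ with $P\mathcal{I}_L=[I_m\ 0]$. The degree bound on $L$ again comes from the stabilization step above.
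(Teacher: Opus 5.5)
The paper gives no proof of this lemma; it cites Chapter~2 of Sundaram's notes, which rest on the Sain--Massey rank test and the normal-rank characterization of the Rosenbrock matrix. Your plan follows that classical route, and most of it is sound. Your first step, that invertibility at delay $L$ is equivalent to the existence of $P$, is correct.

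The rational-function steps are also correct. For ii)$\Rightarrow$i), the polynomial kernel vector multiplied by $z^{-\deg u}$ is a finite input string with $u_0\neq 0$ and identically zero output, which rules out ii) for every $L$. For i)$\Rightarrow$ii), take a real-rational left inverse $H=G_S^{-1}E_S$, where $E_S$ selects rows $S$ with $\det G_S\not\equiv 0$. Then $F(z):=z^{-L}H(z)=\sum_{k\ge 0}F_k z^{-k}$ is proper for large $L$. Matching the coefficients of $z^{0},\dots,z^{-L}$ in $F(z)G(z)=z^{-L}I_m$ gives $P=[F_L\ \cdots\ F_0]$ with $P\mathcal{I}_L=[I_m\ 0]$.

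\textbf{The gap.} The bound $L\le n$ is needed by every implication ending in ii), and you assert it rather than prove it. Two pieces are missing.

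First, nestedness of your subspaces (write them $\mathcal{T}_j$ to avoid a clash with the paper's $\mathcal{V}_0$) does not by itself give stabilization within $n$ steps. A nondecreasing chain can plateau and later grow again. You need the recursion $\mathcal{T}_{-1}=\{0\}$, $\mathcal{T}_{j+1}=\{Ax+Bu : x\in\mathcal{T}_j,\ Cx+Du=0\}$. It makes any plateau permanent, so $\mathcal{T}_j=\mathcal{T}_{n-1}$ for all $j\ge n-1$.

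Second, and this is the real missing idea, ``$r_j$ stabilizes at the same index'' needs a formula tying $r_L$ to $\mathcal{T}_{L-1}$. Dropping the last input maps $\ker\mathcal{I}_L$ into $\ker\mathcal{I}_{L-1}$. Its kernel is $\{(0,\dots,0,u_L): Du_L=0\}$. Its image is the set of strings whose terminal state $x_L\in\mathcal{T}_{L-1}$ satisfies $Cx_L\in\im(D)$. Splitting $\ker\mathcal{I}_{L-1}$ by the terminal-state map onto $\mathcal{T}_{L-1}$ and counting dimensions gives
\[
r_L=\rank\mathcal{I}_L-\rank\mathcal{I}_{L-1}=\dim\bigl(C\,\mathcal{T}_{L-1}+\im(D)\bigr).
\]
This identity yields monotonicity of $r_L$ directly. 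Note that monotonicity does not follow from the last-block-row partition you wrote; the criterion ``delay-$L$ invertible iff $r_L=m$'' comes from splitting off the first block column. The identity also gives $r_L=r_n$ for all $L\ge n$, so invertibility at any delay implies invertibility at delay $n$. With this identity and the recursion added, your argument is complete.
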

%
%In the remainder of the paper, we refer to the existence of an $L$-delay left inverse as the invertibility property. 
Note that invertibility requires that $p \geq m$, i.e., there are at least as many outputs as inputs.
We now make the following assumptions on $\boldsymbol{S}$.
\begin{assumption}\label{ass:minimal}
  The system $\boldsymbol{S}$  
  \begin{enumerate}
      \item is minimal, i.e., $(A, C)$ is observable and $(A, B)$ is controllable, and
      \item is invertible for some $L_0 \leq n$, c.f.  Lemma~\ref{lemma:invertibility}.
      %\item has no overlapping poles and zeros of $\boldsymbol{S}$ in the multivariate case.
  \end{enumerate} 
\end{assumption}
Knowing that $\boldsymbol{S}$ is invertible, we invoke statement ii) from Lemma~\ref{lemma:invertibility} with $L = L_0$ and left-multiply \eqref{eq:stacked_outputs} by $P$ to obtain
\begin{equation}\label{eq:inputs_inverse}
  u_k = P\, y_{k:k+L} - P\, \mathcal{O}_L\, x_k.
\end{equation}
Next, we substitute \eqref{eq:inputs_inverse} back into \eqref{eq:SS_S} providing us with a dynamical system that has $y_{k:k+L}$ as inputs and $u_k$ as outputs.
%\begin{equation}
%  \boldsymbol{S}^{\text{inv}} : \begin{cases} x_{k+1} &= (A - BP\mathcal{O}_L)\, x_k + BP\, y_{k:k+L}, \\ u_k &= -P\mathcal{O}_L\, x_k + P\, y_{k:k+L}. \end{cases}
%\end{equation}
Using $\tilde{A} := (A - BP\mathcal{O}_L)$, $\tilde{B} := BP$, $\tilde{C} := -P\mathcal{O}_L$ and $\tilde{D} := P$, we can write the inverse $\boldsymbol{S}^{\text{inv}}$ as
\begin{equation}\label{eq:SS_inv}
  \boldsymbol{S}^{\text{inv}}: \; \left\{ \begin{aligned} x_{k+1} &= \tilde{A}\, x_k + \tilde{B}\, y_{k:k+L}, \\ u_k &= \tilde{C}\, x_k + \tilde{D}\, y_{k:k+L}. \end{aligned} \right.
\end{equation}
Note that $\boldsymbol{S}$ and $\boldsymbol{S}^{\text{inv}}$ describe the exact same dynamical system with the same states $x_k$.
The matrix $P$ from statement ii) in Lemma~\ref{lemma:invertibility} is not unique, rendering the matrix $\tilde{A}$ not unique.
Before we can connect the poles of $\tilde{A}$ to the dynamics of $\boldsymbol{S}$, we introduce the concept of \emph{invariant zeros}.
\begin{definition}[Section 4.5.1 in \cite{skogestad_multivariable_2010}]\label{def:inv_zeros}
The invariant zeros of $\boldsymbol{S}$ are described by the values of $z \in \C$ for which %the matrix pencil $\mathbf{M}(z)$ loses column rank, i.e.,
\begin{align*}
  %\rank[\mathbf{M}(z)] = 
  \rank\begin{bmatrix} A - z I & B \\ C & D \end{bmatrix} < n + m.
\end{align*}
\end{definition}
Using this definition, we state the following lemma.
\begin{lemma}\label{lemma:eigen_A_tilde}
  All eigenvalues of $\tilde{A}$ can be placed freely by design of the matrix $P$, except for the invariant zeros of $\boldsymbol{S}$.
\end{lemma}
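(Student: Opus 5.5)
We prove the lemma in two parts. First, every invariant zero of $\boldsymbol{S}$ is an eigenvalue of $\tilde{A}$ for every admissible $P$. Second, the remaining eigenvalues can be assigned by the free part of $P$.

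\emph{Invariant zeros are eigenvalues of $\tilde A$.} Let $z_0$ be an invariant zero. By Definition~\ref{def:inv_zeros} there is a nonzero $(x_0,u_0)$ with $(A-z_0I)x_0+Bu_0=0$ and $Cx_0+Du_0=0$. Under Assumption~\ref{ass:minimal}, $x_0\neq 0$. Otherwise $Bu_0=0$ and $Du_0=0$ with $u_0\neq0$, so the constant input $u_k=u_0$ from zero state produces zero output, contradicting invertibility (Lemma~\ref{lemma:invertibility} ii) gives $u_0=P\mathcal{I}_L(u_0,\dots,u_0)=0$).

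Now run $\boldsymbol{S}$ from $x_0$ with input $u_k=z_0^k u_0$, which is complex but fine by linearity. This gives $x_k=z_0^kx_0$ and $y_k\equiv0$. Then \eqref{eq:inputs_inverse} yields $z_0^ku_0=-P\mathcal{O}_Lz_0^kx_0$, so $u_0=-P\mathcal{O}_Lx_0$. Hence
\begin{equation*}
\tilde Ax_0=Ax_0-BP\mathcal{O}_Lx_0=Ax_0+Bu_0=z_0x_0 .
\end{equation*}
So $z_0\in\sigma(\tilde A)$ independently of $P$. To track multiplicities, we would repeat the argument with generalized zero directions (Jordan chains of the pencil), giving $\tilde A$-generalized eigenvectors. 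Equivalently, the subspace $\mathcal V^*$ of initial states producing zero output under some input satisfies $\tilde A\mathcal V^*\subseteq\mathcal V^*$, and the spectrum of $\tilde A|_{\mathcal V^*}$ equals the zero set.

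\emph{Freedom in $P$.} Write $P=P_0+N$, where $P_0$ is a particular solution and $N$ is any matrix with $N\mathcal I_L=0$, i.e.\ $N=K W$ with the rows of $W$ spanning the left null space of $\mathcal I_L$. Then
\begin{equation*}
\tilde A=(A-BP_0\mathcal O_L)-BK(W\mathcal O_L),
\end{equation*}
which is an output-injection/state-feedback form with free gain $K$, "input" matrix $B$ and "measurement" $W\mathcal O_L$. By Lemma~\ref{lemma:invertibility} ii) the choice of $P$ is unconstrained except through $N$, so eigenvalue assignment reduces to pole placement for the pair $(A-BP_0\mathcal O_L,\,B)$ combined with the pair $(W\mathcal O_L, A-BP_0\mathcal O_L)$. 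Since $K$ enters bilinearly between $B$ and $W\mathcal O_L$, we would use a change of coordinates adapted to $\mathcal V^*$.

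\emph{Main obstacle.} The key step is to show that the modes outside $\mathcal V^*$ are assignable: in the quotient $\R^n/\mathcal V^*$, the pair induced by $(A-BP_0\mathcal O_L, B)$ and $W\mathcal O_L$ should satisfy a controllability/observability condition. Roughly, $W\mathcal O_L x=0$ means $\mathcal O_L x\in\im\mathcal I_L$, i.e.\ $x$ can be zeroed at the output over $L+1$ steps. The kernel of $W\mathcal O_L$ therefore approximates $\mathcal V^*$ and decreases to it as $L$ grows, since $\mathcal V^*=\ker$ for $L\ge n$. I would first take $L=n$, so that $\ker W\mathcal O_L=\mathcal V^*$. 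Minimality of $\boldsymbol{S}$ then gives that the quotient system is controllable and observable, and standard pole placement finishes the argument. I expect this quotient-space rank argument to be the hardest part. I would justify it via the PBH test: any uncontrollable or unobservable quotient mode $\lambda$ would produce a vector making the Rosenbrock matrix lose rank at $\lambda$ outside $\mathcal V^*$, a contradiction.
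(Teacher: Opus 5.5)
Your argument is correct in substance, and it takes a different route from the paper. The paper gives no argument of its own and only refers to the proof of Theorem~3.2 in \cite{sundaram_fault-tolerant_2012}. You instead work directly with $\mathcal V^*$ in three steps:
\begin{enumerate}
\item zero directions force every invariant zero into $\sigma(\tilde A)$ for all admissible $P$;
\item the parametrization $P=P_0+KW$ exposes the free part as $BKW\mathcal O_L$;
\item taking $L$ large ($L\ge n-1$ suffices; you take $L=n$) makes $\ker(W\mathcal O_L)=\mathcal V^*$.
\end{enumerate}
This gives a self-contained proof that respects the restricted gain structure $BP$. The free matrix $K$ sits between $B$ and $W\mathcal O_L$, so this is a static-output-feedback problem, not plain output injection.

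It also shows that enlarging the delay is essential, not cosmetic. With $L=L_0$, as used to build $\boldsymbol{S}^{\text{inv}}$, the claim can fail. Take $n=3$, $m=1$, $p=2$, $A$ the upper shift, $B=e_3$, $C=\begin{bmatrix}0\\ e_1^\top\end{bmatrix}$, $D=e_1$. This system is minimal, has no invariant zeros, and has $L_0=0$. Yet every admissible $P$ equals $[\,1\;\;k\,]$ for some $k\in\R$, and $\tilde A=A-ke_3e_1^\top$ has characteristic polynomial $z^3+k$.

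Some steps need tightening when you write the proof out.
\begin{enumerate}
\item Replace ``quotient controllable and observable, then standard pole placement''; that would not suffice for static output feedback. The precise fact is that $\ker(W\mathcal O_L)=\mathcal V^*$ makes $\{KW\mathcal O_L\}$ exactly the set of $F\in\R^{m\times n}$ with $F\mathcal V^*=0$. Now use coordinates adapted to $\mathcal V^*$, which is invariant under $\bar A:=A-BP_0\mathcal O_L$. In them,
\[
\tilde A=\begin{bmatrix}\bar A_{11} & \bar A_{12}-B_1F_2\\ 0 & \bar A_{22}-B_2F_2\end{bmatrix},\qquad F_2 \text{ arbitrary}.
\]
Controllability of $(\bar A_{22},B_2)$ follows directly from controllability of $(A,B)$, via feedback invariance and the PBH test on the quotient. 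No Rosenbrock argument is needed here.
\item The Rosenbrock step belongs instead to the inclusion $\sigma(\bar A_{11})\subseteq\{\text{invariant zeros}\}$, which you only assert. It is the short argument with $\mu=\tilde C\xi$ at the end of the proof of Theorem~\ref{thm:R_stability}.
\item The first output-nulling input from $\xi\in\mathcal V^*$ is unique, so $\bar A_{11}=\tilde A|_{\mathcal V^*}$ does not depend on $P$. The block-triangular form therefore already accounts for multiplicities, and the Jordan-chain detour is unnecessary.
\end{enumerate}
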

\begin{proof}
  See the proof of Theorem 3.2 in \cite{sundaram_fault-tolerant_2012}. 
\end{proof}

We now return to the input estimation using the system inverse $\boldsymbol{S}^{\text{inv}}$. The input $u_k$ can be computed directly from $y_{k:k+L}$ in \eqref{eq:inputs_inverse} if an exact state estimate $x_k$ is available, or if the matrix $P$ can be designed such that $-P\mathcal{O}_L = \tilde{C} = 0$.

If neither is the case, it is straightforward to see that 
\begin{align}
    \hat{u}_k - u_k &= \tilde{C} (\hat{x}_k - x_k),\\
    \hat{x}_{k+1} - x_{k+1} & = \tilde{A}(\hat{x}_k - x_k),
\end{align}
where $\hat{u}_k$ and $\hat{x}_k$ follow $\boldsymbol{S}^\text{inv}$, but with a different initial state. 
Therefore, we need to consider the dynamics of the inverse $\boldsymbol{S}^{\text{inv}}$ in \eqref{eq:SS_inv}, governed by the poles in $\tilde{A}$.
This shows that for the model-based input reconstruction, the convergence for arbitrary initial conditions depends on the existence and location of invariant zeros of $\boldsymbol{S}$.
In particular, Theorems 2.5 and 2.8 in \cite{sundaram_fault-tolerant_2012} show that
\begin{equation}
\rank\begin{bmatrix} \mathcal{O}_L & \mathcal{I}_L \end{bmatrix} = n + \rank[\mathcal{I}_L], \text{ for some } L \leq n,
\end{equation}
if and only if there are no invariant zeros. Under this condition, a matrix $P$ satisfying statement ii) in Lemma~\ref{lemma:invertibility} and $P\mathcal{O}_L = 0$ exists. 
We summarize the results from model-based input reconstruction in the following proposition.
\begin{proposition}\label{prop:model_based}
  Suppose that Assumption~\ref{ass:minimal} holds, and let the matrix $P$ be suitably designed according to statement ii) in Lemma~\ref{lemma:invertibility}. Given only the output trajectory $y_{k:k+L}$, the following statements hold for model-based input reconstruction with an arbitrary initial condition $\hat{x}_0 \in \mathbb{R}^n$:
  \begin{enumerate}
    \item[i)] $\hat{u}_k = u_k$ for all $k \geq \nu$, with $\nu \leq n$, if and only if every invariant zero of $\boldsymbol{S}$ lies at the origin; here $\nu = 0$ is attainable if and only if $\boldsymbol{S}$ has no invariant zero, and
    \item[ii)] $\hat{u}_k - u_k \to 0$ as $k \to \infty$ if and only if $\boldsymbol{S}$ has only stable invariant zeros.
  \end{enumerate}
\end{proposition}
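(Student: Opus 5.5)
The plan is to work entirely with the error dynamics derived above. Set $e_k := \hat{x}_k - x_k$, so that $e_{k+1} = \tilde{A} e_k$ and $\hat{u}_k - u_k = \tilde{C} e_k = -P\mathcal{O}_L e_k$. Since $\hat{x}_0$ is arbitrary, $e_0$ ranges over all of $\R^n$. Both statements therefore reduce to properties of the pair $(\tilde{C},\tilde{A})$:
\begin{itemize}
\item statement i) asks that $\tilde{C}\tilde{A}^k = 0$ for all $k \geq \nu$;
\item statement ii) asks that $\tilde{C}\tilde{A}^k \to 0$.
\end{itemize}
The sufficiency directions use Lemma~\ref{lemma:eigen_A_tilde}: all eigenvalues of $\tilde{A}$ other than the invariant zeros can be placed freely by choosing $P$. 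If every invariant zero is stable, we place the remaining eigenvalues inside the unit disc. Then $\rho(\tilde{A})<1$, so $\tilde{A}^k e_0 \to 0$ and ii) follows. If every invariant zero lies at the origin, we place all free eigenvalues at $0$. Then $\tilde{A}$ is nilpotent, so $\tilde{A}^n = 0$ and $\hat{u}_k = u_k$ for all $k \geq n$, which gives $\nu \leq n$. For $\nu = 0$ with no invariant zeros, we invoke the rank condition from Theorems 2.5 and 2.8 of \cite{sundaram_fault-tolerant_2012} quoted above. It yields a $P$ with $P\mathcal{I}_L = [I_m\ 0]$ and $P\mathcal{O}_L = 0$, hence $\tilde{C}=0$ and $\hat{u}_k = u_k$ from $k=0$.

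For the necessity directions we need to show that an invariant zero $z_0$ forces an error that cannot be removed by any admissible $P$. The plan is to use the zero direction. By Definition~\ref{def:inv_zeros} there is $(x_0,u_0)\neq 0$ with $(A-z_0I)x_0 + Bu_0 = 0$ and $Cx_0 + Du_0 = 0$. Invertibility (Lemma~\ref{lemma:invertibility}) forces $x_0 \neq 0$, because otherwise $Du_0=0$ and $Bu_0=0$, so the column $(0,u_0)$ lies in the kernel of the pencil for every $z$, contradicting i) of Lemma~\ref{lemma:invertibility}. Along the trajectory $x_k = z_0^k x_0$, $u_k = z_0^k u_0$ the output is identically zero. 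Hence $\mathcal{O}_L x_0 + \mathcal{I}_L u_{0:L} = 0$ with $u_{0:L} = (u_0, z_0 u_0,\dots)$, and applying $P$ gives $P\mathcal{O}_L x_0 = -u_0$. Substituting into $\tilde{A}x_0 = Ax_0 + Bu_0$ shows $\tilde{A}x_0 = z_0 x_0$ for \emph{every} admissible $P$; this is the content of Lemma~\ref{lemma:eigen_A_tilde} made explicit. Taking $e_0 = x_0$ (or its real and imaginary parts if $z_0$ is complex), the error becomes
\[
\hat{u}_k - u_k = \tilde{C}\tilde{A}^k x_0 = z_0^k u_0.
\]
Here $u_0 \neq 0$, since otherwise $(A-z_0I)x_0=0$ and $Cx_0=0$, contradicting observability in Assumption~\ref{ass:minimal}. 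Consequently:
\begin{itemize}
\item if $|z_0|\geq 1$, the error does not converge, which gives necessity in ii);
\item if $z_0\neq 0$, the error is never eventually zero, which gives necessity in i);
\item if $z_0 = 0$, the error at $k=0$ is $u_0 \neq 0$, so $\nu = 0$ is impossible.
\end{itemize}

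The main obstacle is the case of a complex invariant zero. There $x_0$ and $u_0$ are complex while $e_0$ must be real. I would handle it by taking real parts: the real error is $\mathrm{Re}(z_0^k u_0)$, and its non-convergence (for $|z_0|\ge1$) and non-vanishing (for $z_0\neq0$) need a brief argument. One could also use the real invariant subspace spanned by $\mathrm{Re}\,x_0$ and $\mathrm{Im}\,x_0$, on which $\tilde{A}$ acts as a rotation-scaling. A secondary point is the phrase ``suitably designed'': I would read it as choosing $P$ per the sufficiency constructions above, with the necessity claims holding for every $P$.
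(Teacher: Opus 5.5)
Your proposal is correct and follows essentially the same route as the paper. You reduce everything to the error dynamics $e_{k+1} = \tilde A e_k$, $\hat u_k - u_k = \tilde C e_k$; sufficiency comes from the eigenvalue-placement statement of Lemma~\ref{lemma:eigen_A_tilde}, and the $\nu = 0$ case comes from the rank condition of Theorems~2.5 and~2.8 in \cite{sundaram_fault-tolerant_2012}.

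Your explicit necessity computation shows $\tilde A x_0 = z_0 x_0$ for every admissible $P$, with $\tilde C x_0 = u_0 \neq 0$ by observability. This fills in a step the paper's sketch leaves implicit: that the unplaceable zero mode actually appears in the input error. The complex case you flag is routine, since the errors from $\mathrm{Re}\,x_0$ and $\mathrm{Im}\,x_0$ combine linearly into $z_0^k u_0$, whose norm is $|z_0|^k \|u_0\|_2$.
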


Statement i) follows from Lemma~\ref{lemma:eigen_A_tilde}: zeros at the origin render $\tilde A$ nilpotent for a suitable $P$, so any initial error is vanishes within at
most $n$ steps, whereas a nonzero invariant zero becomes an eigenvalue of $\tilde A$ regardless of $P$. The special case $\nu=0$ requires $\tilde C = 0$, which removes the state dependence in \eqref{eq:inputs_inverse} altogether and is possible
exactly when $\boldsymbol{S}$ has no invariant zero.

Note that if the initial condition is exact, $\hat x_0=x_0$, due to invertibility, the input estimate will satisfy $\hat{u}_k = u_k$ for any set of invariant zeros. Additionally, note that we refer to stability in the Schur sense. 
\begin{remark}\label{rem:strong_det}
In the model-based case, the property that all invariant zeros lie inside the unit circle is also called \emph{strong detectability}, while having no invariant zeros is referred to as \emph{strong observability} \cite{sundaram_fault-tolerant_2012}. %This nomenclature is primarily used in the context of state estimation. 
\end{remark}

\subsection{Data-Driven System Inversion}

We begin by introducing standard notations in the data-driven literature. For a signal $u_{0:T+N+L} \in \R^{m(T+N+L+1)}$, we denote the block Hankel matrix $\mathcal{H}_t(u_{0:T+N+L}) \in \R^{mt \times (T+N+L+2-t)}$ of depth $t$ by
\begin{equation}
  \mathcal{H}_t(u_{0:T+N+L}) := \begin{bmatrix} u_0 & u_{1} & \cdots & u_{T+N+L+1-t} \\ u_{1} & u_{2} & \cdots & u_{T+N+L+2-t} \\ \vdots & \vdots & \ddots & \vdots \\ u_{t-1} & u_{t} & \cdots & u_{T+N+L} \end{bmatrix}.
\end{equation}
Note that the subscript $t$ refers to the number of block rows of the Hankel matrix.
Then, $u_{0:T+N+L}$ is said to be \emph{persistently exciting} of order $t$ if the Hankel matrix $\mathcal{H}_t(u_{0:T+N+L})$ has full row rank \cite{willems_note_2004}.

To be able to state the system inversion problem for trajectories generated by the system $\boldsymbol{S}$ in \eqref{eq:SS_S}, let ${u^d := u_{0:T+N+L}}$ and $y^d := y_{0:T+N+L}$ denote recorded data, and partition the Hankel matrices as follows
\begin{align}
  \begin{bmatrix} U_p \\ U_f^L \end{bmatrix} &:= \mathcal{H}_{N+L+1}(u^d) \in \R^{m(N+L+1)\times (T+1)}, \\
  \begin{bmatrix} Y_p \\ Y_f^L \end{bmatrix}& := \mathcal{H}_{N+L+1}(y^d)\in \R^{p(N+L+1)\times (T+1)},
\end{align}
where $U_p$ and $U_f^L$ consist of the first $mN$ rows and the last $m(L+1)$ rows of $\mathcal{H}_{N+L+1}(u^d)$, respectively.
Similarly, $Y_p$ and $Y_f^L$ consist of the first $pN$ rows and the last $p(L+1)$ block rows of $ \mathcal{H}_{N+L+1}(y^d)$, respectively.
We define $U_f$ as the first $m$ rows of $U_f^L$.
We refer to these matrices as \emph{offline data}, which we use to represent the system $\boldsymbol{S}$. Note that $N\geq n$ and $L\geq L_0$, where $n$ is the state dimension and $L_0$ the inherent delay of the trajectory generating system $\boldsymbol{S}$. 

While $n$ and $L_0$ may not be known a~priori, there exist methods to estimate them from data. For example, see Section~2.2 in \cite{van_overschee_subspace_1996} to compute $n$, and Algorithm~2 in \cite{mishra_data-driven_2025} to compute $L_0$.
%In other words
%To be able to assure richness of information of these collected trajectories, we assume the following.
%
\begin{assumption}\label{assume:PE}
    The signal $u^d$ is persistently exciting of order $n+ N +L+1$, i.e., the corresponding Hankel matrix $\mathcal{H}_{n+N+L+1}(u^d)$ has full row rank.
\end{assumption}

We are now in place to state the data-based inversion problem for a window of one time step.
\begin{lemma}[Theorem 2 in \cite{eun_data-driven_2023}]\label{lem:invertible_dd_inv}
    Suppose that Assumptions~\ref{ass:minimal} and~\ref{assume:PE} hold.
    %Also, assume that $N\geq n$ and $u^d$ is persistently exciting of order $n+N+L+1$.
    Then, $u_k = \hat{u}_k = U_f g$, where $g \in \R^{T+1}$ is a solution to
    \begin{align}\label{eq:H^L}
    H^L g = \begin{bmatrix} U_p \\ Y_p \\ Y_f^L \end{bmatrix} g = \begin{bmatrix} u_{k-N:k-1} \\ y_{k-N:k-1} \\ y_{k:k+L} \end{bmatrix} 
    \begin{array}{@{}l@{}}
  \leftarrow \text{ini. cond. for } u, \\
  \leftarrow \text{ini. cond. for } y,\\
  \leftarrow \text{measured output.}
  \end{array}
    \end{align}
\end{lemma}

The key insight is that the invertibility of $\boldsymbol{S}$ results in $U_f$ being linearly dependent on $U_p$, $Y_p$, and $Y_f^L$. 
%In other words, the rows in $U_f$ live in the space spanned by the rows of $H^L$.
%\begin{equation}\label{eq:nullspace_equality}
%  \mathcal{N}(H^L) = \mathcal{N}\!\left(\begin{bmatrix} U_p \\ Y_p \\ Y_f^L \end{bmatrix}\right) = \mathcal{N}\!\left(\begin{bmatrix} U_p \\ U_f \\ Y_p \\ Y_f^L \end{bmatrix}\right).
%\end{equation}
We can use this to write $U_f = h^\top H^L$, where $h \in \R^{(N(m+p) + p(L+1)) \times m}$.
If we combine this with \eqref{eq:H^L}, we obtain
\begin{align}\label{eq:input_rec_h}
  \hat{u}_k = U_f g = h^\top H^L g = h^\top \begin{bmatrix} u_{k-N:k-1} \\ y_{k-N:k-1} \\ y_{k:k+L} \end{bmatrix}.
\end{align}
For an invertible system, the estimate satisfies $\hat{u}_k = u_k$ whenever the true input $u_{k-N:k-1}$ is available, either from direct knowledge at every step $k$ or from exact initialization of a recursive algorithm.

\section{Data-Driven Input Reconstruction with Unknown Initial Condition}
\label{sec:Input_Reconstr_unknown_IC}
If the initial condition $u_{k-N:k-1}$ is not known exactly, % the algorithm prescribed in 
\eqref{eq:input_rec_h} is no longer well defined. 
This is because, contrary to \eqref{eq:H^L}, for $\hat{u}_{k-N:k-1} \neq u_{k-N:k-1}$, the fundamental lemma in \cite{willems_note_2004} does not guarantee the existence of some $g$ for which
\begin{equation}\label{eq:dd_input_recon}
  H^L g = \begin{bmatrix} U_p \\ Y_p \\ Y_f^L \end{bmatrix} g = \begin{bmatrix} \hat{u}_{k-N:k-1} \\ y_{k-N:k-1} \\ y_{k:k+L} \end{bmatrix}.
\end{equation}
Since the computation of $\hat{u}_k$ depends recursively on previous, 
potentially incorrect, estimates $\hat{u}_{k-N:k-1}$, a stability 
analysis of the estimation is necessary. 

In the SISO case, \cite{lee_input-output_2025} investigated the convergence properties of \eqref{eq:input_rec_h} when $h^\top = U_f (H^L)^\dagger$ and $\hat{u}_{k-N:k-1}$ unknown, using the Moore--Penrose inverse.
Their result in Corollary~4 states that the unknown input estimate converges if and only if the underlying LTI system is minimum-phase.
For the MIMO case, \cite{shi_data-driven_2022} develops a recursive algorithm using a generalized inverse of $H^L$ via linear matrix inequalities (LMIs), yielding a stable estimator. However, existence conditions of such an estimator are not provided.
The main contribution of this paper is the design of a novel unknown input estimator for the MIMO case, whose convergence can be characterized by invariant zeros of $\boldsymbol{S}$.

\subsection{Data-Driven Input Estimator Design}
Instead of computing the Moore--Penrose inverse $h^\top = U_f (H^L)^\dagger$, we split the problem of computing a candidate $g$ into two parts.
By virtue of the fundamental lemma, we know that there must exist a $g$ for which
\begin{equation}
  Y g = \begin{bmatrix} Y_p \\ Y_f^L \end{bmatrix} g = \begin{bmatrix} y_{k-N:k-1} \\ y_{k:k+L} \end{bmatrix}.
\end{equation}
This is independent of the unknown inputs and has to hold strictly.
In a second part, we propose to minimize the error in $U_p g = \hat{u}_{k-N:k-1}$, which is the mismatch between the previous input estimates and the feasible linear combinations of the offline data.
This procedure can be summarized in a least-squares problem over Hankel matrices, where the output consistency is enforced as a hard constraint
\begin{align}\label{eq:constrained_QP}
  g^\star &\in \arg\min_g \|U_p g - \hat{u}_{k-N:k-1}\|_2^2 \nonumber \\
  &\quad \text{subject to } \begin{bmatrix} Y_p \\ Y_f^L \end{bmatrix} g = \begin{bmatrix} y_{k-N:k-1} \\ y_{k:k+L} \end{bmatrix}, \nonumber \\
  \hat{u}_k &= U_f g^\star.
\end{align}
If the true input $\hat{u}_{k-N:k-1} = u_{k-N:k-1}$ is used in \eqref{eq:constrained_QP}, there exists some $g$ that results in a zero residual, i.e., ${\|U_p g^\star - \hat u_{k-N:k-1}\|^2_2 = 0}$. This means that there exists a $g$ satisfying \eqref{eq:input_rec_h}.
%\begin{equation}
%    H^L g = \begin{bmatrix} u_{k-N:k-1} \\ y_{k-N:k-1} \\ y_{k:k+L} \end{bmatrix}.
%\end{equation}
If we additionally assume that the system is invertible, by the proof in \cite{eun_data-driven_2023}, $\hat{u}_k = U_f g^\star$ is the unique and exact solution $u_k$.

To facilitate further analysis, we choose the following closed-form solution~$g^\star$ to \eqref{eq:constrained_QP}:
%For that, let 
%\begin{equation}
% \hat{u} = u_{k-N:k-1}, \quad y = \begin{bmatrix} y_{k-N:k-1} \\ y_{k:k+L} \end{bmatrix}.
%\end{equation}
\begin{align*}
g^\star = Y^\dagger \begin{bmatrix} y_{k-N:k-1} \\ y_{k:k+L} \end{bmatrix} &+ (I - Y^\dagger Y)(U_p(I - Y^\dagger Y))^\dagger  \\
&\times\left(\hat u_{k-N:k-1} - U_p Y^\dagger \begin{bmatrix} y_{k-N:k-1} \\ y_{k:k+L} \end{bmatrix}\right).
\end{align*}
The matrix $Y^\dagger$ denotes the Moore–Penrose pseudoinverse of $Y$, so that $\Pi_Y^\perp := (I - Y^\dagger Y)$ is the orthogonal projector onto $\ker (Y)$.
To complete the algorithm, we compute the input estimate $\hat{u}_k = U_f g^\star$ as
\begin{align}
  \hat{u}_k  & =    U_f \Pi_Y^\perp (U_p \Pi_Y^\perp)^\dagger \hat{u}_{k-N:k-1}  \nonumber\\
   & \quad + U_f (Y^\dagger - \Pi_Y^\perp (U_p \Pi_Y^\perp)^\dagger U_p Y^\dagger) y_{k-N:k+L} \label{eq:closed_form_solution}\\
     & =  {} M_u\, \hat{u}_{k-N:k-1} + M_y\, y_{k-N:k+L}, \label{eq:abbreviated_algorithm}
\end{align}
using 
\begin{align}
    M_u &:= U_f \Pi_Y^\perp (U_p \Pi_Y^\perp)^\dagger, \label{eq:M_u}\\
    M_y &:= U_f (Y^\dagger - \Pi_Y^\perp (U_p \Pi_Y^\perp)^\dagger U_p Y^\dagger). \label{eq:M_y}
\end{align}

%In the case that $\hat{u}_{k-N:k-1} = u_{k-N:k-1}$, we know that \eqref{eq:abbreviated_algorithm} converges in one step, i.e., it will directly yield $u_k$.

\subsection{Data-driven Input Estimator Convergence Analysis}
A necessary condition for the data-driven input estimator to converge for arbitrary initial conditions is that $\boldsymbol{S}$ is invertible, i.e., $\hat{u}_k = u_k = M_u\, u_{k-N:k-1} + M_y\, y_{k-N:k+L}$. 
%\begin{assumption}\label{ass:invertible}
%  The system $\boldsymbol{S}$ is invertible.
%\end{assumption}
To be able to investigate the convergence of \eqref{eq:abbreviated_algorithm} towards the correct input, we hence compute the difference between the two iterations as
\begin{align}
  e_k = \hat{u}_k - u_k ={}& M_u(\hat{u}_{k-N:k-1} - u_{k-N:k-1}) \nonumber \\
  & + M_y(y_{k-N:k+L} - y_{k-N:k+L}) \nonumber \\
  ={}& M_u\, e_{k-N:k-1}.
\end{align}
By introducing $\epsilon_k := e_{k-N:k-1}$ we can analyze the convergence as a linear system following $\epsilon_{k+1} = R\, \epsilon_k$ for some $R$ admitting the structure
\begin{equation*}
  R := \left[\begin{array}{ccccc}
  0 & I_m &  \cdots & 0 &0 \\
  %0 & 0 & I_m & \cdots & 0 \\
  \vdots & & \ddots & &\vdots \\
  0 & 0 &  \cdots & 0 & I_m \\
  \multicolumn{5}{c}{M_u}
  \end{array}\right].
\end{equation*}
We are now ready to present the main result.
\begin{theorem}\label{thm:R_stability}
Under Assumptions~\ref{ass:minimal} and~\ref{assume:PE}, $R$ is Schur stable if and only if system $\boldsymbol{S}$ in \eqref{eq:SS_S} has only stable invariant zeros.
\end{theorem}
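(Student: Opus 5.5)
The proof will rest on a geometric description of $M_u$. We will read the error recursion $\epsilon_{k+1}=R\,\epsilon_k$ through \emph{zero-output trajectories} of $\boldsymbol{S}$, and then split the spectrum of $R$ into an invariant-zero part and a nilpotent part.

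\textbf{Step 1: a formula for $M_u$.} Define
\[
\mathcal{V}:=\{U_p g : Yg=0\}\subseteq\R^{mN}.
\]
By Assumption~\ref{assume:PE} and the fundamental lemma, $\mathcal{V}$ is exactly the set of input windows $u_{-N:-1}$ that extend to a length-$(N+L+1)$ trajectory of $\boldsymbol{S}$ whose output is identically zero.

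The feasible set of \eqref{eq:constrained_QP} is $g_0+\ker Y$, where $g_0$ reproduces the true trajectory. Hence $U_p g^\star - u_{k-N:k-1}$ is the orthogonal projection $\Pi_{\mathcal{V}}\epsilon_k$ of the error onto $\mathcal{V}$.

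We then need that $U_f g$ is determined by $U_p g$ whenever $Yg=0$. This holds because, with $N\geq n$ and observability, the state is fixed by the past input and output. Invertibility (Lemma~\ref{lemma:invertibility}, ii) with $L\ge L_0$) then fixes the next input. So there is a linear map $F:\mathcal{V}\to\R^m$ with
\[
M_u\epsilon = F\,\Pi_{\mathcal{V}}\epsilon .
\]
In particular $M_u$ vanishes on $\mathcal{V}^\perp$. I would verify this identity directly from the closed forms \eqref{eq:M_u}–\eqref{eq:M_y}, using that $\im(\Pi_Y^\perp)=\ker Y$.

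\textbf{Step 2: the part of $R$ on $\mathcal{V}$ carries exactly the invariant zeros.} Zero-output trajectories of an invertible system are generated by the weakly unobservable subspace $\mathcal{X}^\star$ with dynamics $x^+=(A+BK)x$ and input $u=Kx$. Since $\mathcal{X}^\star$ is $(A+BK)$-invariant, such trajectories extend indefinitely. It follows that the shifted window of an element of $\mathcal{V}$ is again in $\mathcal{V}$, and that it equals $R$ applied to that element, because $M_u v=Fv$ on $\mathcal{V}$. So $\mathcal{V}$ is $R$-invariant.

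The map $x\mapsto(Kx,K(A+BK)x,\dots)$ is a bijection from $\mathcal{X}^\star$ to $\mathcal{V}$. It is injective because a state with zero input window and zero output over $N\ge n$ steps is unobservable, hence zero. This map intertwines $(A+BK)|_{\mathcal{X}^\star}$ with $R|_{\mathcal{V}}$.

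By the standard model-based fact that for an invertible minimal system $\sigma((A+BK)|_{\mathcal{X}^\star})$ equals the set of invariant zeros (the fact underlying Lemma~\ref{lemma:eigen_A_tilde}), we get
\[
\sigma(R|_{\mathcal{V}})=\{\text{invariant zeros of }\boldsymbol{S}\}.
\]
This already gives necessity: an unstable zero yields an eigenvalue of $R$ with $|\lambda|\ge1$.

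\textbf{Step 3 (main obstacle): the induced map on $\R^{mN}/\mathcal{V}$ is nilpotent.} Write $S$ for the block shift with zero fill. For any $\epsilon$ set $v=\Pi_{\mathcal{V}}\epsilon$. Since $M_u$ kills $\epsilon-v$,
\[
R\epsilon = Rv + S(\epsilon-v),
\]
and so the quotient map is $\bar R[\epsilon]=[S(I-\Pi_{\mathcal{V}})\epsilon]$. The difficulty is that a compression of the nilpotent $S$ need not be nilpotent. Therefore I would argue by eigenvectors rather than by powers.

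Suppose $R\epsilon=\lambda\epsilon+v'$ with $v'\in\mathcal{V}$ and $\lambda\neq0$; after subtracting a suitable element of $\mathcal{V}$ we may take $\epsilon\perp\mathcal{V}$. The shift structure of $R$ gives $\epsilon\approx(w,\lambda w,\dots,\lambda^{N-1}w)$ modulo $\mathcal{V}$. Then $S\epsilon=\lambda\epsilon+v'$ forces the appended block to satisfy $\lambda^N w\equiv 0$. Iterating this through the blocks should give $\epsilon\in\mathcal{V}$, which is a contradiction.

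If this direct argument stalls, my fallback is a characteristic-polynomial argument. I would show that $\det(zI-R)=z^{mN-\dim\mathcal{V}}\,\chi_{R|_{\mathcal{V}}}(z)$ by computing $M_u$ in a basis adapted to the windows of $\mathcal{X}^\star$-trajectories.

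\textbf{Conclusion.} With Step 3, $\sigma(R)$ consists of the invariant zeros together with $0$. So $\rho(R)<1$ holds if and only if all invariant zeros lie strictly inside the unit circle, which is the claim of Theorem~\ref{thm:R_stability}.
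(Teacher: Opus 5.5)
\textbf{Steps 1 and 2 are sound; Step 3 asserts something false.} Your Steps 1 and 2 match the paper's argument. Your $\mathcal V$ is the paper's $\im(J)=\tilde{\mathcal O}_{N-1}\im(\tilde X)$. The identity $M_u=F\Pi_{\mathcal V}$ is the paper's $M_u(I-JJ^\dagger)=0$. The intertwining with the zero dynamics is Lemmas~\ref{lemma:full_rank_o} and~\ref{lemma:R_eigvals}.

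The gap is Step~3: the induced map on $\R^{mN}/\mathcal V$ is \emph{not} nilpotent in general. So both routes you propose aim at a false statement: the eigenvector argument and the fallback identity $\det(zI-R)=z^{mN-\dim\mathcal V}\chi_{R|_{\mathcal V}}(z)$. The eigenvector argument also has a defect of its own: ``modulo $\mathcal V$'' bookkeeping block by block is not well defined, because $\mathcal V$ constrains whole windows, not individual blocks.

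Here is a counterexample. Take $x^+=x+u$, $y=x+2u$, which is minimal with $D\neq0$, so $L_0=0$. Use $N=2$ and $L=0$.
\begin{itemize}
\item The only invariant zero is $a=\tfrac12$.
\item Output nulling forces $u=-\tfrac12 x$, so $\mathcal V=\mathrm{span}\{(1,\tfrac12)^\top\}$ and $F(1,\tfrac12)^\top=\tfrac14$.
\item Hence $M_u=F\Pi_{\mathcal V}=[\,\tfrac15\ \ \tfrac1{10}\,]$.
\end{itemize}
Therefore
\[
R=\begin{bmatrix}0&1\\ \tfrac15&\tfrac1{10}\end{bmatrix},\qquad \det(zI-R)=\bigl(z-\tfrac12\bigr)\bigl(z+\tfrac25\bigr),
\]
so $\sigma(R)=\{\tfrac12,-\tfrac25\}$, not $\{\tfrac12,0\}$. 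For a general zero $a$ in this scalar setting, the extra eigenvalue is $-a/(1+a^2)$. This is exactly the phenomenon you flagged: the orthogonal compression of the nilpotent shift has nonzero eigenvalues.

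\textbf{The missing idea.} You only need $\rho(\bar R)<1$, not nilpotency, and this follows from contractivity; the paper proceeds this way.
\begin{itemize}
\item Let $Q_2$ have orthonormal columns spanning $\mathcal V^\perp$. Because the least-squares projection is orthogonal, $M_u$ vanishes on $\mathcal V^\perp$. Hence $RQ_2=SQ_2$, and $\bar R$ is represented by $Q_2^\top SQ_2$, the $(2,2)$ block of $R$ in the basis $[\,Q_1\ Q_2\,]$.
\item Since $\|S\|_2\le 1$, you get $\rho(Q_2^\top SQ_2)\le1$.
\item Suppose $Q_2^\top SQ_2v=\lambda v$ with $|\lambda|=1$. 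Then equality must hold throughout $\|Q_2^\top SQ_2v\|_2\le\|SQ_2v\|_2\le\|Q_2v\|_2$. This forces $SQ_2v\in\im(Q_2)$, hence $SQ_2v=\lambda Q_2v$, contradicting $\sigma(S)=\{0\}$.
\end{itemize}
An oblique complement of $\mathcal V$ would not give this norm bound, so the orthogonality is essential.

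With this replacement, your Steps 1--2 yield the theorem. However, you must drop your concluding claim $\sigma(R)=\{\text{invariant zeros}\}\cup\{0\}$. What holds is only $\sigma(R)=\{\text{invariant zeros}\}\cup\sigma(Q_2^\top SQ_2)$ with $\rho(Q_2^\top SQ_2)<1$.
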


We present the proof of Theorem~\ref{thm:R_stability} at the end of this section.
Theorem~\ref{thm:R_stability} shows that the estimation is asymptotically correct for every initialization if and only if $\boldsymbol{S}$ has only stable invariant zeros, consistent with the stable pole-zero cancellations in the SISO case~\cite{lee_input-output_2025}.
\begin{comment}
    \begin{corollary}\label{col:strongly_detectable}
  Under Assumptions~\ref{ass:minimal} and~\ref{assume:PE}, the system $\boldsymbol{S}$ has only stable invariant zeros if and only if $\rho(R)<1$.  
\end{corollary}
\begin{proof}
    It follows directly from Theorem~\ref{thm:R_stability}.
\end{proof}
\end{comment}

As mentioned in Remark~\ref{rem:strong_det}, the property that $\boldsymbol{S}$ has only stable invariant zeros is, in the model-based case, referred to as strong detectability.
A data-driven condition for strong detectability was already presented in Theorem 5.7 in \cite{DBCbook}, but it requires state, input, and output data. 
In contrast, $\rho(R)$ can be computed only using input-output data, as visible from the definition of $M_u$ in~\eqref{eq:M_u}.

By Theorem~\ref{thm:R_stability}, the convergence property of the proposed algorithm in \eqref{eq:abbreviated_algorithm} is characterized as follows.

% \textcolor{red}{Add some intuitions of the theorem and explanations on the relationship to the data-driven strongly detectable.}

\begin{corollary}\label{col:asymptotic_convergence}
Under Assumptions~\ref{ass:minimal} and~\ref{assume:PE}, the following statements hold for the data-driven unknown-input estimation algorithm in \eqref{eq:abbreviated_algorithm}, for an arbitrary initial estimate $\hat{u}_{k-N:k-1} \in \mathbb{R}^{m N}$:
\begin{enumerate}
\item[(i)] $\hat{u}_k = u_k$ from the first iteration if and only if every invariant zero of $\boldsymbol{S}$ lies at the origin, the case of no invariant zeros included, and
\item[(ii)] $\hat{u}_k - u_k \to 0$ as $k \to \infty$ if and only if $\boldsymbol{S}$ has only stable invariant zeros.
% \item[(iii)] $\hat{u}_k - u_k \to \infty$ as $k \to \infty$ if and only if $\boldsymbol{S}$ has at least one unstable invariant zero.
\end{enumerate}
\end{corollary}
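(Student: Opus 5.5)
The plan is to read both statements off the error recursion $\epsilon_{k+1}=R\,\epsilon_k$, $e_k=M_u\epsilon_k$. Statement (ii) then follows from Theorem~\ref{thm:R_stability}. Statement (i) needs a separate characterization of when $M_u=0$, which I obtain from the zero-output trajectories of $\boldsymbol{S}$ via the fundamental lemma.

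\emph{Statement (ii).} The initial estimate $\hat u_{k-N:k-1}$ is arbitrary, so the initial error $\epsilon_k$ ranges over all of $\R^{mN}$. The stacked vector $\epsilon_{k'}$ consists of the last $N$ errors $e_{k'-N},\dots,e_{k'-1}$. Hence $e_{k'}\to 0$ holds if and only if $\epsilon_{k'}\to 0$. Convergence of $\epsilon_{k'}=R^{k'-k}\epsilon_k$ to zero for every initial vector is equivalent to $\rho(R)<1$. By Theorem~\ref{thm:R_stability}, this holds exactly when all invariant zeros are stable.

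\emph{Statement (i).} Here ``$\hat u_k=u_k$ from the first iteration for every initialization'' means $M_u\epsilon=0$ for all $\epsilon$, i.e.\ $M_u=0$. Since $M_u=U_f\Pi_Y^\perp(U_p\Pi_Y^\perp)^\dagger$, it suffices to study $U_f$ on $\ker Y$. By Assumption~\ref{assume:PE} and the fundamental lemma, the vectors $\col(U_p,U_f^L,Y_p,Y_f^L)g$ with $g\in\ker Y$ are exactly the input parts of length-$(N+L+1)$ trajectories of $\boldsymbol{S}$ with identically zero output. Such trajectories have state in the weakly unobservable subspace. Their dynamics are the zero dynamics, whose spectrum is the set of invariant zeros; by Assumption~\ref{ass:minimal} (invertibility), the input is a linear function of that state.

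\emph{Sufficiency in (i).} If all invariant zeros lie at the origin, the zero dynamics are nilpotent of index at most $n\le N$. So after $N$ steps the state of any zero-output trajectory vanishes. Invertibility then forces its input at time $N$ to be zero. Hence $U_f\Pi_Y^\perp=0$, and therefore $M_u=0$.

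\emph{Necessity in (i).} Suppose there is an invariant zero $\lambda\neq0$. Take the corresponding zero-output trajectory with $u_j=\lambda^j u_0$ and $u_0\neq0$; if $\lambda$ is complex, use the real part. Its past input $\bar u:=u_{0:N-1}$ lies in $\im(U_p\Pi_Y^\perp)$. The least-squares problem \eqref{eq:constrained_QP} with zero output data and history $\bar u$ therefore attains zero residual. By the uniqueness argument from \cite{eun_data-driven_2023}, it returns the true continuation $\lambda^N u_0\neq0$. Thus $M_u\bar u\neq0$, so the first estimate is wrong for the initial error $\epsilon=\bar u$.

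The main obstacle is making the identification of $\ker Y$ with zero-output trajectories rigorous. This includes the claim that invertibility gives $u$ as a static function of the state on the weakly unobservable subspace, and the handling of complex $\lambda$. I would try to reuse whatever spectral correspondence between $R$ and the zero dynamics is established in the proof of Theorem~\ref{thm:R_stability}.
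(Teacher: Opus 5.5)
Your argument is correct. Statement (ii) and the sufficiency half of (i) follow the paper's argument. The paper also reduces (i) to $M_u=0$, and it shows $\tilde A^N\tilde X=0$ from $\tilde A^N X_z=X_zA_z^N=0$ with $r\le n\le N$. That is your observation that zero-output trajectories reach the zero state by time $N$, so $U_f\Pi_Y^\perp=\tilde C\tilde A^N\tilde X=0$.

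You depart from the paper on necessity in (i). The paper argues spectrally: $M_u=0$ gives $R=\tilde S$, which is nilpotent. Then $\sigma(R)=\sigma(A_z)\cup\sigma(Q_2^\top\tilde S Q_2)$, together with the inclusion of the invariant zeros in $\sigma(A_z)$ from the proof of Theorem~\ref{thm:R_stability}, forces every zero to the origin. You instead build an explicit witness. The past input $\bar u$ of the zero-output trajectory of a nonzero invariant zero lies in $\im(U_p\Pi_Y^\perp)$ by the fundamental lemma. Hence $g^\star$ satisfies $H^Lg^\star=\col(\bar u,0,0)$, and Lemma~\ref{lem:invertible_dd_inv} gives $M_u\bar u=\lambda^N\mu\neq0$. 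This is valid and more elementary: it needs neither the block-triangularization of $R$ nor the identification of $\sigma(A_z)$ with the zeros, and it exhibits a concrete initialization with a nonzero first error. The paper's route is shorter only because that machinery is already built for Theorem~\ref{thm:R_stability}.

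Two small points need tightening.
\begin{itemize}
\item $u_0=\mu\neq0$ must be justified. It follows from observability: if $\mu=0$, then $\xi\neq0$ would satisfy $(A-\lambda I)\xi=0$ and $C\xi=0$.
\item For complex $\lambda$, the real part alone can fail, because $\operatorname{Re}(\lambda^N\mu)$ may vanish even though $\lambda^N\mu\neq0$. Take whichever of the real or imaginary part of the trajectory has a nonzero input at time $N$; both are real zero-output trajectories. Equivalently, extend $M_u$ linearly to $\C^{mN}$ and note that $M_u\bar u=\lambda^N\mu\neq0$ already forces $M_u\neq0$.
\end{itemize}
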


Corollary~\ref{col:asymptotic_convergence} admits a direct interpretation: if $\boldsymbol{S}$ has no invariant zeros or invariant zeros only at the origin, the input is recovered after a single update. 
If $\boldsymbol{S}$ has only stable invariant zeros, the effect of an incorrect initialization decays asymptotically.
This establishes a direct connection between the data-driven algorithm in \eqref{eq:abbreviated_algorithm} and the model-based Proposition~\ref{prop:model_based}: the system-theoretic conditions guaranteeing convergence under inexact initialization are identical in both settings, differing only in the number of steps required for exact recovery. 
The data-driven estimator achieves it within its first iteration, because that iteration already spans $N \geq n$ samples, whereas the model-based inverse needs up to $n$ steps to annihilate the initial error.

We now continue with untangling $R$ using \eqref{eq:M_u}, leading up to the proof of our main result and the corresponding corollary. For that, we find relationships between the data matrices $U_p, U_f, Y_p$ and $Y_f^L$.
Inspired by the derivations in \cite{eun_data-driven_2023}, we can find the following connection using the state-space matrices of an inverse $\boldsymbol{S}^{\text{inv}}$:
\begin{align}\label{eq:extended_inverse_data}
  \begin{bmatrix} U_p \\ U_f \end{bmatrix} ={}& 
  %\begin{bmatrix} \tilde{C} \\ \tilde{C}\tilde{A} \\ \vdots \\ \tilde{C}\tilde{A}^{N} \end{bmatrix} 
  \tilde{\mathcal{O}}_{N}
  X + 
  %\underbrace{\begin{bmatrix} \tilde{D} & 0 & \cdots & 0 \\ \tilde{C}\tilde{B} & \tilde{D} & \cdots & 0 \\ \vdots & \ddots & \ddots & \vdots \\ \tilde{C}\tilde{A}^{N-1}\tilde{B} & \tilde{C}\tilde{A}^{N-2}\tilde{B} & \cdots & \tilde{D} \end{bmatrix}}_{
  \tilde{\mathcal{I}}_{N} W \begin{bmatrix} Y_p \\ Y_f^L \end{bmatrix},
\end{align}
where $\tilde{\mathcal{O}}_{N}$ and $\tilde{\mathcal{I}}_{N}$ are defined analogously to $\mathcal{O}_N$ and $\mathcal{I}_N$, using the system matrices of $\boldsymbol{S}^{\text{inv}}$ in \eqref{eq:SS_inv}.
The matrix $W \in \mathbb{R}^{(N+1)p(L+1)\times p(N+L+1)}$ consists of $N+1$ stacked, horizontally shifted, identity matrices of size $p(L+1)$, each shifted by $p$ columns. 
The matrix $X \in \mathbb{R}^{n\times (T+1)}$ contains state trajectories and is not available. 
% In the following, we exploit the relationship among the data matrices $X, U_p, U_f, Y_p, Y_f^L$.

Recall that $Y\Pi_Y^\perp = Y - YY^\dagger Y = 0$. Then, right-multiplying \eqref{eq:extended_inverse_data} by $\Pi_Y^\perp$ results in 
\begin{align}
    \begin{bmatrix}
        U_p \\ U_f
    \end{bmatrix}  \Pi_Y^\perp = \tilde{\mathcal{O}}_N X \Pi_Y^\perp, \text{ where } \tilde{\mathcal{O}}_{N}
  = \begin{bmatrix} \tilde{\mathcal{O}}_{N-1} \\ \tilde{C}\tilde{A}^N \end{bmatrix}.
\end{align}
If we now let $\tilde{X} = X\Pi_Y^\perp$, we get
\begin{align}
U_p \Pi_Y^\perp = \tilde{\mathcal{O}}_{N-1} \tilde{X}, \quad
U_f  \Pi_Y^\perp= \tilde{C}\tilde{A}^N \tilde{X}.
\end{align}
\begin{comment}
    Note that 
\begin{equation}
  %\begin{bmatrix} \tilde{C} \\ \tilde{C}\tilde{A} \\ \vdots \\ \tilde{C}\tilde{A}^{N} \end{bmatrix} 
  \tilde{\mathcal{O}}_{N}
  = \begin{bmatrix} \tilde{\mathcal{O}}_{N-1} \\ \tilde{C}\tilde{A}^N \end{bmatrix}, \quad \tilde{\mathcal{I}}_{N} = \begin{bmatrix} \overline{\mathcal{I}} \\ \underline{\mathcal{I}} \end{bmatrix},
\end{equation}
where $\overline{\mathcal{I}}$ and $\underline{\mathcal{I}}$ partition the invertibility matrix $\tilde{\mathcal{I}}_{N+1}$ along the last $m$ rows.
%
%Now, recall that
%\begin{equation}
%  M_u = U_f \Pi_Y^\perp (U_p \Pi_Y^\perp)^\dagger.
%\end{equation}
We recall that $Y\Pi_Y^\perp = Y - YY^\dagger Y = 0$ and then substitute
\begin{align}
U_p &= \tilde{\mathcal{O}}_{N-1} X + \overline{\mathcal{I}} W Y, \\
U_f &= \tilde{C}\tilde{A}^N X + \underline{\mathcal{I}} WY
\end{align}
into \eqref{eq:M_u} to obtain
\begin{align*}
 M_u &= (\tilde{C}\tilde{A}^N X + \underline{\mathcal{I}}W Y)\Pi_Y^\perp \left((\tilde{\mathcal{O}}_{N-1} X + \overline{\mathcal{I}} W Y) \Pi_Y^\perp\right)^\dagger \\
  &= (\tilde{C}\tilde{A}^N X \Pi_Y^\perp)(\tilde{\mathcal{O}}_{N-1} X \Pi_Y^\perp)^\dagger.
\end{align*}
If we now let $\tilde{X} = X\Pi_Y^\perp$, we get
\end{comment}
We can use this to write $M_u$ as
\begin{equation}
  M_u = \tilde{C}\tilde{A}^N \tilde{X} (\tilde{\mathcal{O}}_{N-1} \tilde{X})^\dagger.
\end{equation}
Then, the following lemma characterizes the image of $\tilde X$.

\begin{lemma}\label{lemma:range_X_tilde}
Suppose that Assumptions~\ref{ass:minimal} and~\ref{assume:PE} hold.
%and $u^d$ is persistently exciting of order $N + n + 1 + L$ with $N \geq n$.
Let
\begin{align}
    \mathcal{V}_0 := \{\xi \in \R^n&:\exists \bar u \in \R^{m(N+L+1)}~\nonumber \\
    &\mathrm{s.t}.~\mathcal{O}_{N+L} \xi + \mathcal{I}_{N+L} \bar u = 0\}.
\end{align}
Then, $\im(\tilde X) = \mathcal{V}_0$.
\end{lemma}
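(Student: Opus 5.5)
The plan is to identify $\im(\tilde X)$ with the set of initial states, within the offline data, from which a zero output trajectory of length $N+L+1$ can be generated. First note that $\im(\tilde X) = \im(X\Pi_Y^\perp) = \{Xg : g \in \ker(Y)\}$, because $\Pi_Y^\perp$ is the orthogonal projector onto $\ker(Y)$. Here the $j$-th column of $X$ is the state $x_j$ at the start of the $j$-th Hankel column. Since the data are generated by $\boldsymbol{S}$, the Hankel matrices satisfy
\begin{equation*}
\begin{bmatrix} Y_p \\ Y_f^L \end{bmatrix} = \mathcal{O}_{N+L} X + \mathcal{I}_{N+L} \begin{bmatrix} U_p \\ U_f^L \end{bmatrix},
\end{equation*}
which is just \eqref{eq:stacked_outputs} applied column-wise with horizon $N+L$. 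Both inclusions will be read off from this identity.

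For the inclusion $\im(\tilde X) \subseteq \mathcal{V}_0$, take $g \in \ker(Y)$ and set $\xi := Xg$ and $\bar u := \mathcal{H}_{N+L+1}(u^d)\, g$. Multiplying the identity above by $g$ gives $\mathcal{O}_{N+L}\xi + \mathcal{I}_{N+L}\bar u = Yg = 0$. Hence $\xi \in \mathcal{V}_0$. This direction uses no assumptions beyond the data being generated by $\boldsymbol{S}$.

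For the reverse inclusion, take $\xi \in \mathcal{V}_0$ together with a witness $\bar u$, so that $\mathcal{O}_{N+L}\xi + \mathcal{I}_{N+L}\bar u = 0$. We want some $g$ with
\begin{equation*}
\begin{bmatrix} X \\ \mathcal{H}_{N+L+1}(u^d) \end{bmatrix} g = \begin{bmatrix} \xi \\ \bar u \end{bmatrix}.
\end{equation*}
If such a $g$ exists, the identity gives $Yg = \mathcal{O}_{N+L}\xi + \mathcal{I}_{N+L}\bar u = 0$. Then $\Pi_Y^\perp g = g$, and so $\xi = X\Pi_Y^\perp g = \tilde X g \in \im(\tilde X)$.

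Solvability for arbitrary $(\xi,\bar u)$ follows if the stacked matrix $\begin{bmatrix} X^\top & \mathcal{H}_{N+L+1}(u^d)^\top \end{bmatrix}^\top$ has full row rank $n + m(N+L+1)$. This is exactly the state-augmented version of Willems' fundamental lemma (Corollary 2 in \cite{willems_note_2004}). It requires controllability of $(A,B)$, which holds by Assumption~\ref{ass:minimal}, and an input that is persistently exciting of order $n$ plus the Hankel depth, here $n+N+L+1$, which holds by Assumption~\ref{assume:PE}.

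The main obstacle is this full-rank step. Before invoking the cited lemma, I will check that the depth and column count of $\mathcal{H}_{N+L+1}(u^d)$ match it, with $T+1$ columns and states $x_0,\dots,x_T$ aligned to the Hankel columns. If a self-contained argument is preferred, I would reproduce the standard proof by contradiction. Suppose a nonzero left-kernel vector $(\eta,\mu)$ of the stacked matrix exists. Shift it along the trajectory $n$ times using $x_{j+1} = Ax_j + Bu_j$, which produces $n+1$ left-kernel vectors of the deeper Hankel matrix $\mathcal{H}_{n+N+L+1}(u^d)$. Then use Cayley--Hamilton together with controllability to show that some nontrivial combination of them involves only input rows. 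This contradicts the full row rank in Assumption~\ref{assume:PE}.
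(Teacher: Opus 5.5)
Your proof is correct, and your forward inclusion is the same as the paper's. The reverse inclusion takes a different route.

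The paper applies the input--output fundamental lemma to the trajectory $(\bar u, 0)$. This yields $w$ with $Uw = \bar u$ and $Yw = 0$, but gives no control over $Xw$. The paper then uses observability of $(A,C)$, i.e., injectivity of $\mathcal{O}_{N+L}$ since $N+L \geq n-1$, to turn $\mathcal{O}_{N+L}(Xw - \xi) = 0$ into $Xw = \xi$.

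You instead invoke the state-augmented rank condition $\rank \begin{bmatrix} X^\top & \mathcal{H}_{N+L+1}(u^d)^\top \end{bmatrix}^\top = n + m(N+L+1)$. This lets you prescribe $\xi$ and $\bar u$ simultaneously, and $Yg = 0$ then follows from the data identity. This is the stronger form of the fundamental lemma: the input--output form follows from it by left-multiplying with $\begin{bmatrix} 0 & I \\ \mathcal{O}_{N+L} & \mathcal{I}_{N+L} \end{bmatrix}$. In effect, the paper uses observability to recover, from the weaker form, exactly the state information you get directly.

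Your route is shorter. It also shows that this particular lemma needs only controllability and persistency of excitation. Observability is still needed elsewhere in the paper, e.g., in Lemma~\ref{lemma:full_rank_o}. The paper's route stays with the input--output statement, at the cost of that extra step. Your bookkeeping ($T+1$ columns, states $x_0,\dots,x_T$, excitation order $n$ plus depth $N+L+1$) matches the cited corollary.

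One small imprecision in your self-contained sketch: the shifted vectors lie in the left kernel of $\mathcal{H}_{n+N+L+1}(u^d)$ stacked with the truncated state matrix $\begin{bmatrix} x_0 & \cdots & x_{T-n} \end{bmatrix}$, not of the Hankel matrix alone. It is the Cayley--Hamilton combination that then removes the state block.
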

\begin{proof}
  Since $\im(\Pi_Y^\perp) = \ker(Y)$, we have $\im(\tilde X)=\{Xw:w\in \ker(Y)\}$.
  Let $\xi = Xw$ with $w \in \ker(Y)$.
  Consider now the stacked outputs of the forward system
  \begin{equation}
    Y= \mathcal{O}_{N+L} X + \mathcal{I}_{N+L} \underbrace{\begin{bmatrix} U_p \\ U_f^L \end{bmatrix}}_U.
  \end{equation}
  %where $U_f^L$ is constructed in the same way as $Y_f^L$ based on the input data $u^d$.
  Due to persistence of excitation, $U$ has full row rank. 
  We then find
  \begin{align}
    Yw = 0 = \mathcal{O}_{N+L} Xw + \mathcal{I}_{N+L} Uw
    %\iff \mathcal{O}_{N+L+1} Xw = \mathcal{O}_{N+L+1} \xi = -\mathcal{I}_{N+L+1} Uw,
  \end{align}
    which implies $\xi \in \mathcal{V}_0$, and thus we have $\im(\tilde X) \subseteq \mathcal{V}_0$.

  Conversely, let $\xi \in \mathcal{V}_0$.
  Then there exists an input sequence $\bar u \in \R^{m(N+L+1)}$ such that $\mathcal{O}_{N+L} \xi + \mathcal{I}_{N+L} \bar u = 0$.
  Therefore, $(\bar u, 0)$ is a $(N+L+1)$-length input-output trajectory of the system \eqref{eq:SS_S}.
  By the fundamental lemma, %Lemma~\ref{lemma:trajectory_of_S}, 
  applied with horizon $N+L+1$, there exists $w \in \R^{T+1}$ such that $Uw = \bar u$ and $Yw = 0$.
  Hence,
  \begin{align*}
      0 \!=\! Yw \!=\! \mathcal{O}_{N+L} Xw + \mathcal{I}_{N+L}\bar u \!=\! \mathcal{O}_{N+L}(Xw - \xi).
  \end{align*}
  Since $(A,C)$ is observable and $N+L \geq n -1$, $\mathcal{O}_{N+L}$ has full column rank (i.e., $\ker(\mathcal{O}_{N+L})=\{0\}$), and thus $Xw = \xi$.
  Due to $Yw = 0$, we have $w \in \ker(Y)$, and therefore
  \begin{align}
      \xi = Xw \in X\ker(Y) = \im(X \Pi_Y^\perp)= \im(\tilde X),
  \end{align}
  which proves $\im(\tilde X) = \mathcal{V}_0$. 
  %
  %  Recall that an invariant zero $z_i \in \C$ with corresponding $\xi_i, \mu_i \neq 0$ satisfies \begin{equation}\label{eq:inv_zero}
  %  \begin{bmatrix} A - z_i I & B \\ C & D \end{bmatrix} \begin{bmatrix} \xi_i \\ \mu_i \end{bmatrix} = \begin{bmatrix} 0 \\ 0 \end{bmatrix}.
  %\end{equation} 
  %  Successively right-multiplying by $z_i$ yields trajectories $x_k = z_i^k \xi_i$, $u_k = z_i^k \mu_i$ with $y_k \equiv 0$, which over $N+L+1$ steps gives
  %  \begin{equation}
  %      \mathcal{O}_{N+L+1}\xi_i + \mathcal{I}_{N+L+1} \begin{bmatrix} \mu_i \\ z_i \mu_i \\ z_i^2 \mu_i \\ \vdots \end{bmatrix} = 0,
  %  \end{equation}
  %  confirming that $\im(\tilde{X})$ characterizes the output-nulling states of $\boldsymbol{S}$. 
\end{proof}

Under Assumptions~\ref{ass:minimal} and~\ref{assume:PE}, $\mathcal{V}_0$ is the largest weakly unobservable subspace (see Chapter 7 in \cite{trentelman2001ControlTheoryLinear} for details) of $\boldsymbol{S}$ in \eqref{eq:SS_S}.
Hence, for $\xi \in \mathcal{V}_0 = \im(\tilde X)$, there exists an input sequence such that the corresponding output is zero.
%The following lemma shows that $\im(\tilde X)$ is $\tilde A$-invariant.
\begin{lemma}\label{lemma:tilde_A_invariant}
Under Assumptions~\ref{ass:minimal} and~\ref{assume:PE}, $\im(\tilde X)$ is $\tilde A$-invariant, i.e., $\tilde A \im(\tilde X) \subseteq \im(\tilde X)$.
\end{lemma}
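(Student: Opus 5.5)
The plan is to prove the statement directly from the characterization $\im(\tilde X)=\mathcal{V}_0$ in Lemma~\ref{lemma:range_X_tilde}, together with the structure $\tilde A = A - BP\mathcal{O}_L$ of the model-based inverse. Since the image of $\tilde X$ is the set of states from which the output of $\boldsymbol{S}$ can be held at zero over $N+L+1$ steps, it suffices to show two things. First, the input that achieves this is forced at the first step to be $-P\mathcal{O}_L\xi$. Second, the successor state $\tilde A\xi$ is again output-nulling over the full horizon.

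\textbf{Step 1 (the first input is forced).} Take $\xi\in\mathcal{V}_0$ and $\bar u=\col(\bar u_0,\dots,\bar u_{N+L})$ with $\mathcal{O}_{N+L}\xi+\mathcal{I}_{N+L}\bar u=0$. The first $p(L+1)$ block rows of this identity are $\mathcal{O}_L\xi+\mathcal{I}_L\bar u_{0:L}=0$, by the recursive structure of $\mathcal{O}$ and $\mathcal{I}$ and since $N\ge 0$. Left-multiplying by $P$ from statement ii) of Lemma~\ref{lemma:invertibility} gives, as in \eqref{eq:inputs_inverse}, $\bar u_0=-P\mathcal{O}_L\xi$. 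Hence
\[
\tilde A\xi = A\xi + B\bar u_0 = x_1,
\]
where $x_1$ is the state reached after one step along the zero-output trajectory. I assume here that $P$ is designed for the same delay $L$ used in the data. This is harmless, since any $P$ for $L_0$ extends by zero-padding to $L\ge L_0$.

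\textbf{Step 2 (the horizon shortens).} The state $x_1$ admits the input $\bar u_{1:N+L}$ producing zero output for $N+L$ steps. So directly we only obtain $x_1\in\mathcal{V}^{(N+L)}$, where
\[
\mathcal{V}^{(j)}:=\{\xi:\exists \bar u,\ \mathcal{O}_{j-1}\xi+\mathcal{I}_{j-1}\bar u=0\}
\]
denotes the states that are output-nullable over $j$ steps, so that $\mathcal{V}_0=\mathcal{V}^{(N+L+1)}$. This horizon loss is the main obstacle.

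\textbf{Step 3 (the chain stabilizes).} To overcome it, I would use the standard recursion
\[
\mathcal{V}^{(j+1)}=\{\xi:\exists u,\ C\xi+Du=0,\ A\xi+Bu\in\mathcal{V}^{(j)}\}.
\]
This shows that $\mathcal{V}^{(1)}\supseteq\mathcal{V}^{(2)}\supseteq\cdots$ is a decreasing chain of subspaces of $\R^n$. Moreover, once $\mathcal{V}^{(j)}=\mathcal{V}^{(j+1)}$ holds, the recursion gives $\mathcal{V}^{(j+1)}=\mathcal{V}^{(j+2)}$, and so on. Since dimensions can strictly drop at most $n$ times, the chain is constant for all $j\ge n+1$, and it equals the largest weakly unobservable subspace, consistent with the remark after Lemma~\ref{lemma:range_X_tilde}.

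\textbf{Step 4 (conclusion).} Because $N\ge n$ and $L\ge 0$, we have $N+L\ge n+1$ whenever $N+L\ge n+1$. In the edge case $L=0$, $N=n$, I would either use that the chain is already stable at index $n$ (the subspace of states output-nullable over $j\ge n$ steps is constant), or simply note that Assumption~\ref{assume:PE} and $N\ge n$ permit this. In either case $\mathcal{V}^{(N+L)}=\mathcal{V}^{(N+L+1)}=\mathcal{V}_0$. Thus $\tilde A\xi=x_1\in\mathcal{V}_0=\im(\tilde X)$, which proves $\tilde A\,\im(\tilde X)\subseteq\im(\tilde X)$. The only delicate point is confirming this index bookkeeping, namely that stabilization occurs by index $n$ and that $N+L\ge n$ suffices. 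I would verify it by checking the base case $\mathcal{V}^{(0)}=\R^n$ and counting the at most $n$ strict dimension drops.
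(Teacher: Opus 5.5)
Your proof is correct and follows the paper's argument: the paper likewise uses $P$ to force the first output-nulling input to be $-P\mathcal{O}_L\xi=\tilde C\xi$, so that $\tilde A\xi$ is the successor state on a zero-output trajectory. It then concludes $x_1\in\mathcal{V}_0$ by simply asserting that $\mathcal{V}_0$ is the largest (infinite-horizon) weakly unobservable subspace, which is exactly the finite-horizon stabilization you prove in Steps 2--4. Your own count settles the edge case $N=n$, $L=0$: the first index $j$ with $\mathcal{V}^{(j)}=\mathcal{V}^{(j+1)}$ is preceded by $j$ strict dimension drops from $\mathcal{V}^{(0)}=\R^n$, so $j\le n$. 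The chain is therefore constant from index $n$ on, and $N+L\ge N\ge n$ suffices, so you can drop both the vacuous sentence and the appeal to Assumption~\ref{assume:PE} in Step~4.
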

\begin{proof}
Let $\xi \in \im(\tilde X)$.
From Lemma~\ref{lemma:range_X_tilde}, $\im(\tilde X) = \mathcal{V}_0$, which is the largest weakly unobservable subspace, and hence there exists a zero-output trajectory $(x_k,u_k)_{k\geq 0}$ such that $y_k \equiv 0$ of the system $\boldsymbol{S}$ from $x_0 = \xi$.
From Lemma~\ref{lemma:invertibility}, there exists a matrix $P$ such that $P\mathcal{I}_L = [~I_m~~0~]$.
From \eqref{eq:stacked_outputs}, we have $\mathcal{O}_L x_k + \mathcal{I}_L u_{k:k+L} = 0 $, and left-multiplying by $P$ yields $u_k = -P\mathcal{O}_L x_k = \tilde C x_k$.
Hence, 
\begin{align*}
    x_{k+1} = Ax_{k} + Bu_{k} = Ax_k - BP\mathcal{O}_L x_k = \tilde A x_k.
\end{align*}
Since the trajectory is output-nulling, $x_{k+1} \in \mathcal{V}_0$.
In particular, for $k=0$, we obtain $\tilde A \xi = x_1 \in \mathcal{V}_0 = \im(\tilde X)$, which implies $\tilde A \im(\tilde X) \subseteq \im(\tilde X)$.
\end{proof}

To continue investigating the properties of $M_u$, we state the following technical lemma on $\tilde{\mathcal{O}}_{N-1}$.

\begin{lemma}\label{lemma:full_rank_o}
Under Assumptions~\ref{ass:minimal} and~\ref{assume:PE}, the observability matrix $\tilde{\mathcal{O}}_{N-1}$ of the inverse system $\boldsymbol{S}^{\text{inv}}$ in \eqref{eq:SS_inv} %is injective on $\im(\tilde{X})$, i.e., 
satisfies $\im(\tilde X)\cap \ker(\tilde{\mathcal{O}}_{N-1}) = \{0\}$.
\end{lemma}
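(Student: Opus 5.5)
Take $\xi \in \im(\tilde X)\cap\ker(\tilde{\mathcal{O}}_{N-1})$; the goal is to show $\xi = 0$.

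\emph{Step 1: build the output-nulling trajectory.} By Lemma~\ref{lemma:range_X_tilde}, $\xi\in\mathcal{V}_0$. As in the proof of Lemma~\ref{lemma:tilde_A_invariant}, there is an output-nulling trajectory of $\boldsymbol{S}$ with $x_0=\xi$. Along it, $u_k=\tilde C x_k$ and $x_{k+1}=\tilde A x_k$, so $x_k=\tilde A^k\xi$. By Lemma~\ref{lemma:tilde_A_invariant}, every $x_k$ stays in $\im(\tilde X)$.

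This step needs care with the horizon. Membership in $\mathcal{V}_0$ only guarantees a zero output over $N+L+1$ steps. The relation $u_k=-P\mathcal{O}_Lx_k$ requires zero output on the window $y_{k:k+L}$, which holds for $k\le N$. That is enough, since we only need $u_0,\dots,u_{N-1}$. Alternatively, I would invoke that $\mathcal{V}_0$ equals the largest weakly unobservable subspace, which gives an infinite output-nulling continuation directly, as the paper remarks after Lemma~\ref{lemma:range_X_tilde}.

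\emph{Step 2: the inputs vanish.} The assumption $\tilde{\mathcal{O}}_{N-1}\xi=0$ means $\tilde C\tilde A^k\xi=0$ for $k=0,\dots,N-1$. Hence $u_k=\tilde C x_k=0$ for $k=0,\dots,N-1$.

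\emph{Step 3: conclude from observability.} With $u_k=0$ on $0\le k\le N-1$, the trajectory reduces to the free response $x_k=A^k\xi$ and $y_k=CA^k\xi$. The outputs are zero, so $\mathcal{O}_{N-1}\xi=0$. Since $N\ge n$ and $(A,C)$ is observable (Assumption~\ref{ass:minimal}), $\mathcal{O}_{N-1}$ has full column rank. Therefore $\xi=0$.

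As a consistency check, the same conclusion can be phrased directly. Since $\tilde A=A+B\tilde C$, we get $\tilde A^k\xi=A^k\xi$ inductively whenever $\tilde C\tilde A^j\xi=0$ for $j<k$. The zero-output condition $Cx_k+Du_k=0$ then gives $CA^k\xi=0$ for $k<N$.

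\emph{Main obstacle.} The delicate step is the horizon bookkeeping in Step 1. We must ensure that the identity $u_k=\tilde C x_k$ is valid for all $k\le N-1$, using only the finite-horizon definition of $\mathcal{V}_0$. For $k\le N-1$, the window $k:k+L$ lies within $0:N+L$, where the output is zero, so the argument goes through. The other step is routine: recognizing that the vanishing $\tilde C$-outputs make the inverse-system trajectory coincide with the free response of $\boldsymbol{S}$.
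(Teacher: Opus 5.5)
Your proof is correct and follows essentially the same route as the paper: you take the output-nulling trajectory from $\xi\in\mathcal{V}_0$, use $\tilde{\mathcal{O}}_{N-1}\xi=0$ to force $u_0,\dots,u_{N-1}=0$, and then obtain $\mathcal{O}_{N-1}\xi=0$, so $\xi=0$ by observability with $N\ge n$ (the paper phrases this as a contradiction, which is only cosmetic). Your check that the finite horizon $N+L+1$ in the definition of $\mathcal{V}_0$ already justifies $u_k=\tilde C x_k$ for $k\le N-1$ tightens a point the paper leaves implicit, while the appeal to Lemma~\ref{lemma:tilde_A_invariant} in Step~1 is not actually needed.
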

\begin{proof}
  We prove this by contradiction.
  Consider a nonzero vector $v \in \im(\tilde X)\cap \ker(\tilde{\mathcal{O}}_{N-1})$.
  By Lemma~\ref{lemma:range_X_tilde}, $\im(\tilde X)=\mathcal{V}_0$, and hence there exists a zero-output trajectory of the system $\boldsymbol{S}$ starting from $x_0 = v$.
  Since the inverse system $\boldsymbol{S}^{\text{inv}}$ in \eqref{eq:SS_inv} describes the same trajectory of $\boldsymbol{S}$, and $y_k \equiv 0$ along the trajectory, we have $x_{k+1}=\tilde A x_k$ and $u_k = \tilde C x_k$.
  Therefore, $x_k = \tilde A^k v$ and $u_k = \tilde C \tilde A^k v$.
  By the premise that $v \in \ker (\tilde{\mathcal{O}}_{N-1})$, it follows that $u_k = \tilde C \tilde A^k v = 0$ for $k=0,\ldots,N-1$.
  
  For these time steps, the original system $\boldsymbol{S}$ reduces to $x_{k+1}=Ax_{k}$ and $y_k=Cx_k$, so that $x_k = A^k v$ for $k=0,\ldots,N-1$.
  Since the output trajectory is zero, we obtain $CA^k v = 0$ for $k=0,\ldots,N-1$, implying $v \in \ker (\mathcal{O}_{N-1})$.
  The assumptions that $(A,C)$ is observable and $N\geq n$ imply $\ker(\mathcal{O}_{N-1}) = \{0\} $, and thus $v = 0$, which contradicts the premise.
  Therefore, $\im(\tilde X)\cap \ker(\tilde{\mathcal{O}}_{N-1}) = \{0\}$. 
\end{proof}

%Since we assumed that there are no overlapping poles and zeros in the multivariate case in Assumption \ref{ass:minimal}, Lemma~\ref{lemma:full_rank_o} guarantees that $\tilde{\mathcal{O}}_{N-1}$ is injective for all $v \in \im(\tilde{X})$.
This allows us to proceed with the last technical lemma leading to the proof of Theorem~\ref{thm:R_stability}.
 
\begin{lemma}\label{lemma:R_eigvals}
Under Assumptions~\ref{ass:minimal} and~\ref{assume:PE}, $R\,\tilde{\mathcal{O}}_{N-1} v = \tilde{\mathcal{O}}_{N-1} \tilde{A} v$ for $v \in \im(\tilde{X})$.
\end{lemma}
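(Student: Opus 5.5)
The plan is to compare the two sides block by block. Write $\tilde{\mathcal{O}}_{N-1}v = \col(\tilde C v, \tilde C\tilde A v, \dots, \tilde C\tilde A^{N-1}v)$. By the shift structure of $R$, its first $N-1$ block rows move every block up by one. Hence the first $N-1$ blocks of $R\,\tilde{\mathcal{O}}_{N-1}v$ are $\tilde C\tilde A v,\dots,\tilde C\tilde A^{N-1}v$, and these are exactly the first $N-1$ blocks of $\tilde{\mathcal{O}}_{N-1}\tilde A v$. It therefore remains to show that the last block row agrees, namely
\begin{equation*}
M_u\,\tilde{\mathcal{O}}_{N-1} v = \tilde C \tilde A^{N} v \quad \text{for all } v\in\im(\tilde X).
\end{equation*}

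For this I will use the factorization $M_u = \tilde C\tilde A^N \tilde X(\tilde{\mathcal{O}}_{N-1}\tilde X)^\dagger$ derived above. Since $v\in\im(\tilde X)$, write $v=\tilde X w$, so that $\tilde{\mathcal{O}}_{N-1}v = (\tilde{\mathcal{O}}_{N-1}\tilde X)w$. With $G:=\tilde{\mathcal{O}}_{N-1}\tilde X$ this gives
\begin{equation*}
M_u\tilde{\mathcal{O}}_{N-1}v = \tilde C\tilde A^N \tilde X\, G^\dagger G\, w.
\end{equation*}
Here $G^\dagger G$ is the orthogonal projector onto $\ker(G)^\perp$, so $w - G^\dagger G w\in\ker(G)$.

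The key step, which I expect to be the only nontrivial one, is to show that $\tilde X\ker(G)=\{0\}$. If $Gw'=0$, then $\tilde X w'\in\im(\tilde X)\cap\ker(\tilde{\mathcal{O}}_{N-1})$, and Lemma~\ref{lemma:full_rank_o} forces $\tilde X w'=0$. Applying this with $w'=w-G^\dagger G w$ gives $\tilde X G^\dagger G w = \tilde X w = v$, and hence
\begin{equation*}
M_u\tilde{\mathcal{O}}_{N-1}v=\tilde C\tilde A^N v,
\end{equation*}
as required. The argument also shows that the result does not depend on the choice of $w$ representing $v$. Stacking the shifted blocks with this last block yields $R\,\tilde{\mathcal{O}}_{N-1}v=\tilde{\mathcal{O}}_{N-1}\tilde A v$.

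The proof needs no $\tilde A$-invariance (Lemma~\ref{lemma:tilde_A_invariant}), since $\tilde C\tilde A^N v$ comes directly from the factorization of $U_f\Pi_Y^\perp$. The only delicate point is handling the pseudoinverse correctly, which is why I argue through the projector $G^\dagger G$ and Lemma~\ref{lemma:full_rank_o} rather than assuming that $G$ has full column rank.
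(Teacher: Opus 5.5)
Your proof is correct and follows essentially the same route as the paper. Both reduce the claim to the last block row, use $M_u=\tilde C\tilde A^N\tilde X(\tilde{\mathcal{O}}_{N-1}\tilde X)^\dagger$ together with $GG^\dagger G=G$, and invoke Lemma~\ref{lemma:full_rank_o}. The only difference is cosmetic: the paper argues in state space that $w-v\in\im(\tilde X)\cap\ker(\tilde{\mathcal{O}}_{N-1})$, while you argue in coefficient space that $\tilde X$ annihilates $\ker(G)$, which is the same fact.
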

\begin{proof}
Consider $v \in \im(\tilde X)$.
There exists $\beta \in \mathbb{R}^{T+1}$ such that $v = \tilde X \beta$.
Due to the design of $R$ and $\tilde{\mathcal{O}}_{N-1}$, we get
\begin{equation}
  R\,\tilde{\mathcal{O}}_{N-1} v = \begin{bmatrix} \tilde{C}\tilde{A}v \\ \tilde{C}\tilde{A}^2 v \\ \vdots \\ \tilde{C}\tilde{A}^N \tilde{X}(\tilde{\mathcal{O}}_{N-1} \tilde{X})^\dagger \tilde{\mathcal{O}}_{N-1} v \end{bmatrix}.
\end{equation}
The last block can be written as 
\begin{multline*}
    \tilde{C}\tilde{A}^N \tilde{X}(\tilde{\mathcal{O}}_{N-1} \tilde{X})^\dagger \tilde{\mathcal{O}}_{N-1} v  \\ =\tilde{C}\tilde{A}^N \tilde{X}(\tilde{\mathcal{O}}_{N-1} \tilde{X})^\dagger(\tilde{\mathcal{O}}_{N-1} \tilde X) \beta.
\end{multline*}
Let $w := \tilde{X}(\tilde{\mathcal{O}}_{N-1} \tilde{X})^\dagger(\tilde{\mathcal{O}}_{N-1} \tilde X) \beta$.
Note that $w \in \im (\tilde X)$.
Then,
\begin{multline}
    \tilde{\mathcal{O}}_{N-1}w = \tilde{\mathcal{O}}_{N-1}\tilde{X}(\tilde{\mathcal{O}}_{N-1} \tilde{X})^\dagger(\tilde{\mathcal{O}}_{N-1} \tilde X) \beta  \\ = \tilde{\mathcal{O}}_{N-1}\tilde{X} \beta = \tilde{\mathcal{O}}_{N-1} v.
\end{multline}
By virtue of Lemma~\ref{lemma:full_rank_o}, we have ${\im(\tilde X)\cap \ker(\tilde{\mathcal{O}}_{N-1}) = \{0\}}$, %$\tilde{\mathcal{O}}_{N-1}$ is injective on $\im(\tilde X)$,
providing us with $w=v$, and hence $v =\tilde{X}(\tilde{\mathcal{O}}_{N-1} \tilde{X})^\dagger\tilde{\mathcal{O}}_{N-1} v $.
Therefore,
\begin{equation}
  R\,\tilde{\mathcal{O}}_{N-1} v = \begin{bmatrix} \tilde{C}\tilde{A}v \\ \tilde{C}\tilde{A}^2 v \\ \vdots \\ \tilde{C}\tilde{A}^N v \end{bmatrix} = \tilde{\mathcal{O}}_{N-1} \tilde{A} v,
\end{equation}
which concludes the proof.
\end{proof}

Now, we are in place to proof Theorem~\ref{thm:R_stability}.

\begin{proof}[Proof of Theorem~\ref{thm:R_stability}]
Choose a basis matrix $X_z \in \mathbb{R}^{n \times r}$ of $\im(\tilde X)$, where $r := \dim(\im(\tilde X))$.
From Lemma~\ref{lemma:tilde_A_invariant}, $\im(\tilde X)$ is $\tilde A$-invariant, and thus we can define $A_z \in \mathbb{R}^{r \times r}$ such that
\begin{align}\label{eq:A_z}
\tilde A X_z = X_z A_z.
\end{align}
Define $J := \tilde{\mathcal{O}}_{N-1} X_z$.
By Lemma~\ref{lemma:full_rank_o}, $\tilde{\mathcal{O}}_{N-1}$ is injective on $\im(\tilde X)$, and hence $J$ has full column rank.
According to Lemma~\ref{lemma:R_eigvals}, we have $R\,\tilde{\mathcal{O}}_{N-1} v = \tilde{\mathcal{O}}_{N-1} \tilde{A} v$ for $v \in \im(\tilde{X})$.
Applying this with $v = X_z \alpha$ yields
\begin{multline*}
RJ\alpha
= R \tilde{\mathcal{O}}_{N-1} X_z \alpha  \\ = \tilde{\mathcal{O}}_{N-1} \tilde A X_z \alpha = \tilde{\mathcal{O}}_{N-1} X_z A_z \alpha = JA_z \alpha
\end{multline*}
for all $\alpha \in \R^r$, and therefore,
\begin{align}\label{eq:RJ}
RJ = J A_z,
\end{align}
which implies $\im(J)$ is $R$-invariant.

Consider an orthogonal matrix $Q := [~Q_1~~Q_2~]$ such that the columns of $Q_1$ and $Q_2$ form an orthonormal basis of $\im(J)$ and $\im(J)^\bot$, respectively.
Then, we have
\begin{align}
Q^\top R Q =
\begin{bmatrix}
Q_1^\top R Q_1 & Q_1^\top R Q_2 \\
Q_2^\top R Q_1 & Q_2^\top R Q_2
\end{bmatrix},
\end{align}
where $\im(R Q_1) \subseteq \im(J)$, and thus $Q_2^\top R Q_1 = 0$.
Consequently, it follows that
\begin{align}\label{eq:QRQ_relation}
Q^\top R Q =
\begin{bmatrix}
Q_1^\top R Q_1 & Q_1^\top R Q_2 \\
0 & Q_2^\top R Q_2
\end{bmatrix},
\end{align}
which implies that $R$ is similar to the matrix on the right-hand side of \eqref{eq:QRQ_relation}.
Since the spectrum of a block upper triangular matrix is the union of the spectra of its diagonal blocks,
\begin{align}
\sigma(R) = \sigma(Q_1^\top R Q_1) \cup \sigma(Q_2^\top R Q_2).
\end{align}

We next address $\sigma(Q_1^\top R Q_1)$.
Since $J$ has full column rank, we may choose $Q_1 = J(J^\top J)^{-1/2}$.
Then,
\begin{align}
Q_1^\top R Q_1 &= (J^\top J)^{-1/2}J^\top R J (J^\top J)^{-1/2} \nonumber \\
&\overset{(\ref{eq:RJ})}{=} (J^\top J)^{-1/2}J^\top J A_z (J^\top J)^{-1/2} \nonumber \\
& = (J^\top J)^{1/2} A_z (J^\top J)^{-1/2}.
\end{align}
Note that $(J^\top J)^{1/2}$ and $(J^\top J)^{-1/2}$ are invertible because $J$ has full column rank.
Therefore, $Q_1^\top R Q_1$ and $A_z$ are similar, and thus $\sigma(Q_1^\top R Q_1) = \sigma(A_z)$.

Next, consider $\sigma(Q_2^\top R Q_2)$.
By the definition of $R$, we can rewrite it as
\begin{align*}
R =
\underbrace{
\begin{bmatrix}
0 & I & & \\
& &   \!\ddots\! \\
& &  &   I \\
0 & 0 & \!\cdots\! & 0
\end{bmatrix}}_{\tilde S} + (e_N \otimes I) M_u,
\end{align*}
where $e_N$ is the $N$th canonical basis vector of $\R^N$ and $\otimes$ denotes the Kronecker product.
Then, we have
\begin{multline*}
M_u(I\!-\!J J^\dagger) \! \\ =\! \tilde C \tilde A^N \tilde X (\tilde{\mathcal{O}}_{N-1} \tilde X)^\dagger\!\left(I\!-\!(\tilde{\mathcal{O}}_{N-1} \tilde X)(\tilde{\mathcal{O}}_{N-1} \tilde X)^\dagger\right)\!=\!0.
\end{multline*}
Moreover, since the columns of $Q_2$ lie in $\im(J)^\bot$, we have $JJ^\dagger Q_2 = 0$, and hence $(I-JJ^\dagger)Q_2 = Q_2$.
Consequently, we have
\begin{align}
RQ_2 = R(I-JJ^\dagger)Q_2 &= (\tilde S + (e_N \otimes I)M_u)(I-JJ^\dagger)Q_2 \nonumber \\
&= \tilde S(I-JJ^\dagger)Q_2 = \tilde S Q_2,
\end{align}
which implies that $Q_2^\top R Q_2 = Q_2^\top \tilde S Q_2$ and $\sigma(Q_2^\top R Q_2) = \sigma(Q_2^\top \tilde S Q_2)$.
Together with the above result, the spectrum of $R$ can be characterized by
\begin{align} \label{eq:R_spectrum}
\sigma(R) = \sigma(A_z) \cup \sigma(Q_2^\top \tilde S Q_2).
\end{align}
From the definition of $\tilde S$, $\| \tilde S \|_2 = 1$.
Thus, we have
\begin{align*}
\rho(Q_2^\top \tilde S Q_2) \leq \|Q_2^\top \tilde S Q_2\|_2 \leq  \|Q_2^\top\|_2  \|S\|_2  \|Q_2 \|_2 =  1.
\end{align*}
To exclude eigenvalues on the unit circle, assume for contradiction that $(Q_2^\top \tilde S Q_2)v = \lambda v$ for some nonzero vector and $|\lambda|=1$.
Then,
\begin{align*}
\|v\|_2 = \| (Q_2^\top \tilde S Q_2) v\|_2 \overset{(a)}{\leq} \|\tilde S Q_2 v\|_2 \leq \|Q_2 v\|_2 = \|v\|_2,
\end{align*}
so equality holds throughout.
Equality $(a)$ implies $\tilde S Q_2 v \in \im(J)^\bot$.
Therefore,
\begin{align}
\tilde S Q_2 v = Q_2 Q_2^\top \tilde S Q_2 v = \lambda Q_2 v,
\end{align}
which implies $\lambda$ is an eigenvalue of $\tilde S$.
However, $\tilde S$ is nilpotent, so its only eigenvalue is $0$, which is a contradiction.
Thus, $(Q_2^\top \tilde S Q_2)$ has no eigenvalue on the unit circle. Therefore, $\rho(Q_2^\top \tilde S Q_2) < 1$, and from \eqref{eq:R_spectrum},
\begin{align}\label{eq:R_radius}
\rho(R) < 1 \Longleftrightarrow \rho(A_z) < 1.
\end{align}

Finally, we show that $\sigma(A_z)$ is identical to the set of invariant zeros of \eqref{eq:SS_S}.
By Lemma~\ref{lemma:range_X_tilde}, $\im(\tilde X) = \mathcal{V}_0$, and hence $X_z$ is a basis matrix of $\mathcal{V}_0$.
For any $\xi \in \mathcal{V}_0$, if $\bar u := u_{0:L}$ is an output-nulling input sequence satisfying $\mathcal{O}_L \xi + \mathcal{I}_L \bar u = 0$, then, since there exists a matrix $P$ such that $P\mathcal{I}_L = [~I_m~~0~]$ from Lemma~\ref{lemma:invertibility}, its first element is uniquely given by $u_0 = -P\mathcal{O}_L \xi = \tilde C \xi$.
Therefore, the map $\tilde A \xi = (A-BP\mathcal{O}_L)\xi = A\xi + B \tilde C \xi = A\xi + Bu_0$ is precisely the zero-dynamics map on $\mathcal{V}_0$.

Let $\lambda \in \sigma(A_z)$.
Then, there exists $\alpha \neq 0$ such that $A_z \alpha = \lambda \alpha$.
Set $\xi := X_z \alpha \neq 0$ and $\mu:=\tilde C \xi$.
Since $\xi \in\mathcal{V}_0$, the input $\mu$ is the first input of a zero-output trajectory, and thus $C\xi + D\mu = 0$.
Moreover, we obtain
\begin{align*}
    (A-\lambda I)\xi +B\mu &=(A+B\tilde C-\lambda I)\xi=(\tilde A-\lambda I)\xi \\
    &= (\tilde A X_z - X_z A_z)\alpha \overset{(\ref{eq:A_z})}{=} 0.
\end{align*}
Hence,
\begin{align}\label{eq:invariant_zero_thm}
\begin{bmatrix} A - \lambda I & B \\ C & D
\end{bmatrix}
\begin{bmatrix} \xi \\ \mu
\end{bmatrix} =
\begin{bmatrix} 0 \\ 0
\end{bmatrix}.
\end{align}
Therefore, $\lambda$ is an invariant zero of \eqref{eq:SS_S}.

Conversely, let $\lambda$ be an invariant zero of \eqref{eq:SS_S}.
Then, by definition, there exists a nonzero pair $(\xi, \mu)$ such that \eqref{eq:invariant_zero_thm} holds.
The trajectory defined by $x_k = \lambda^k \xi$ and $u_k = \lambda^k \mu$ yields $y_k \equiv 0$, and therefore $\xi \in \mathcal{V}_0 = \im(\tilde X) = \im(X_z)$.
Considering $k=0$, from (\ref{eq:SS_inv}), we have $\mu = \tilde C \xi + \tilde D y_{0:L} = \tilde C \xi$.
By the definition of $\tilde C$, thus, $\mu = -P \mathcal{O}_L \xi$.
Also, $\xi = X_z \alpha $ for some $\alpha \neq 0$.
Consequently, it follows that
\begin{align*}
\tilde A \xi = (A-BP\mathcal{O}_L)\xi = A\xi + B\mu = \lambda \xi.
\end{align*}
Using (\ref{eq:A_z}) and $\xi = X_z \alpha$, we obtain
\begin{align*}
X_z A_z \alpha = \tilde A X_z \alpha = \lambda X_z \alpha.
\end{align*}
Since $X_z$ has full column rank, left-multiplying by $X_z^\dagger$ gives $A_z \alpha = \lambda \alpha$.
Therefore, $\lambda \in \sigma(A_z)$.
Hence, the set of invariant zeros of \eqref{eq:SS_S} coincides with $\sigma(A_z)$.
Combining this with \eqref{eq:R_radius}, we conclude that $R$ is Schur stable if and only if \eqref{eq:SS_S} has only stable invariant zeros, which proves the claim. 
\end{proof}
Lastly, we provide the proof of Corollary~\ref{col:asymptotic_convergence}.

\begin{proof}[Proof of Corollary~\ref{col:asymptotic_convergence}]
Since $e_k = M_u e_{k-N:k-1}$ and $e_{k-N:k-1}$ ranges over all of $\mathbb{R}^{mN}$,
statement (i) is equivalent to $M_u = 0$. By the proof of Theorem~\ref{thm:R_stability},
$\sigma(A_z)$ is the set of invariant zeros of $\boldsymbol{S}$.
If these all lie at the origin, $A_z$ is nilpotent with $A_z^N = 0$, since
$A_z \in \mathbb{R}^{r \times r}$ with $r \leq n \leq N$. Applying \eqref{eq:A_z}
repeatedly gives $\tilde A^N X_z = X_z A_z^N = 0$, so $\tilde A^N$ vanishes on
$\im(X_z) = \im(\tilde X)$. As every column of $\tilde X$ lies in $\im(\tilde X)$,
we obtain $\tilde A^N \tilde X = 0$ and hence $M_u = 0$. Conversely, $M_u = 0$
renders $R = \tilde S$ nilpotent, so $\sigma(A_z) \subseteq \sigma(R) = \{0\}$
by \eqref{eq:R_spectrum}.
Statement (ii) follows directly from Theorem~\ref{thm:R_stability}.
\end{proof}

\section{Numerical Examples}
\label{sec:NumericalExamples}

We illustrate the theoretical results through numerical examples, applying \eqref{eq:constrained_QP} to invertible systems with various invariant zero configurations.
First, we address numerical considerations for implementing \eqref{eq:closed_form_solution}.

\subsection{Numerical Implementation}

The algorithm in \eqref{eq:closed_form_solution} requires computing $\Pi_Y^\perp = I - Y^\dagger Y$, which can be numerically inconvenient to form explicitly for large $T$.
To compensate for this, we use a nullspace parameterisation that yields an equivalent problem of smaller dimension.
For that, we compute the singular value decomposition (SVD) $Y = U_Y \Sigma_Y V_Y^\top$ and partition $V_Y$ into range-space columns $V_r$ and null-space columns $V_{\text{null}}$.
Using this, every solution to $Yg = y$ can now also be written as
\begin{equation}\label{eq:splitting_Y}
g = Y^\dagger y + V_{\text{null}} \alpha, \quad \alpha \in \R^{T+1 - r_Y},
\end{equation}
where $Y^\dagger y$ is a particular solution and $V_{\text{null}}\alpha$ always lies in $\ker(Y)$. For the practical implementation, we truncate the singular values of $Y$ at $10^{-4}$.
%With this parameterization, no feasible $g$ is excluded.
Substituting \eqref{eq:splitting_Y} into the objective of \eqref{eq:constrained_QP} gives an unconstrained problem
\begin{equation}\label{eq:low_dim_OP}
\min_\alpha \|U_p V_{\text{null}} \alpha - (\hat{u} - U_p Y^\dagger y)\|^2_2,
\end{equation}
which has dimension $T +1 - r_Y$, where $r_Y = \rank[Y]$. We solve \eqref{eq:low_dim_OP} via truncated SVD, with a threshold of $10^{-3}$.
Because the substitution is an exact reparameterisation of the feasible set, the recovered
\begin{equation}
g^\ast = Y^\dagger y + V_{\text{null}} U_{p,0}^\dagger (\hat{u} - U_p Y^\dagger y),
\quad U_{p,0} = U_p V_{\text{null}},
\end{equation}
is the minimizer of the original constrained problem. Finally, we obtain
\begin{align}
  M_u &:= U_f V_{\text{null}} U_{p,0}^\dagger, \\
  M_y &:= U_f (Y^\dagger - M_u U_p Y^\dagger)
\end{align}
for the recursive estimation algorithm\footnote{The code can be accessed at \url{https://github.com/ennobr/DD_INV_MIMO}.}.

\subsection{Numerical examples}
We consider three numerical examples, each with two inputs, two outputs, and four states, differing only in their set of invariant zeros.

\textbf{Stable invariant zeros.} Figure~\ref{fig:stable_zero} shows results for a system with only stable invariant zeros at $z=0.7$ and $z=0.8$. The estimate converges asymptotically to the true input. After estimation begins (marked by the vertical black line), the residual drops sharply over $N$ steps as the algorithm flushes out the incompatible initial guess, then decays asymptotically to zero.
\begin{comment}
    \begin{align*}
A &= \begin{bmatrix}
0.5 & 1 & 0 & 0 \\
0 & 0.6 & 0 & 0 \\
0 & 0 & 0.3 & 1 \\
0 & 0 & 0 & 0.4
\end{bmatrix}, &
B &= \begin{bmatrix}
0 & 0 \\
1 & 0 \\
0 & 0 \\
0 & 1
\end{bmatrix}, \\
C &= \begin{bmatrix}
-1 & 5 & 0 & 0 \\
0 & 0 & -1 & 2
\end{bmatrix}, &
D &= \begin{bmatrix}
0 & 0 \\
0 & 0
\end{bmatrix}.
\end{align*}
\end{comment}
%
\begin{figure}[h]
  \centering
  %\vspace{-3mm}
  \includegraphics[width=\columnwidth]{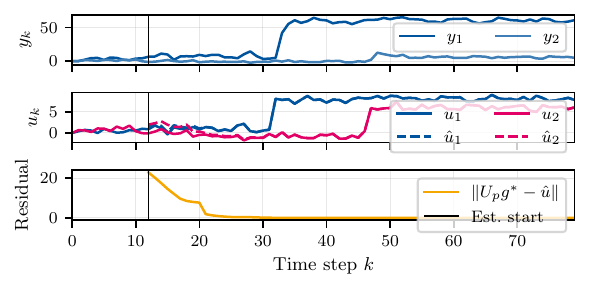}
  % \vspace{-8mm}
  \caption{With $N=10$, estimation converges quickly for a system with only stable invariant zeros.}
  \label{fig:stable_zero}
\end{figure}

\textbf{No invariant zeros.} Figure~\ref{fig:no_zero} presents a strongly observable system with no invariant zeros. The algorithm yields an exact input estimate from the first step. The residual converges to zero in finite time, indicating that the $g$ generated in early iterations is incompatible with the fundamental lemma, yet this does not affect the input estimate.
%
\begin{comment}
    \begin{align*}
A &= \begin{bmatrix}
0.25 & 0.25 & 0 & 0.5 \\
0.5 & 0 & 0 & 0.25 \\
-0.25 & -0.25 & 0 & 0 \\
0.25 & 0.25 & 0 & 0
\end{bmatrix}, &
B& = \begin{bmatrix}
0 & -1 \\
0 & 0 \\
0 & 1 \\
-1 & 0
\end{bmatrix}, \\
C &= \begin{bmatrix}
1 & -1 & 0 & 0 \\
1 & 0 & -1 & 0
\end{bmatrix}, &
D& = \begin{bmatrix}
0 & 0 \\
0 & 0
\end{bmatrix}.
\end{align*}
%
\end{comment}
%
\begin{figure}[h]
  \centering
  %\vspace{-3mm}
  \includegraphics[width=\columnwidth]{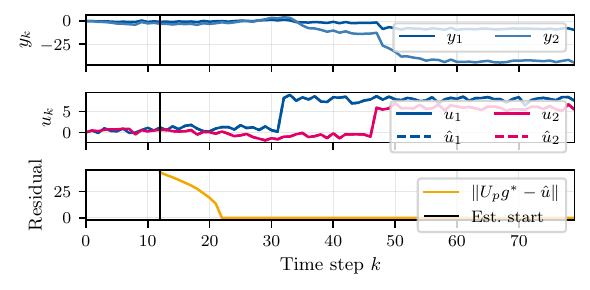}
  % \vspace{-8mm}
  \caption{With $N=10$, estimation is exact from the first step for a strongly observable system.}
  \label{fig:no_zero}
\end{figure}

\textbf{Unstable invariant zeros.} Figure~\ref{fig:unstable_zero} shows a system with an unstable invariant zero at $z=1.25$. The input estimate diverges while the residual converges to zero, confirming that the algorithm finds a $g$ satisfying the fundamental lemma. Yet, this $g$ does not provide a unique input estimate, a consequence of the unstable invariant zero.
%
\begin{comment}
    \begin{align*}
A &= \begin{bmatrix}
0.25 & 0.25 & 0 & 0.5 \\
0.5 & 0 & 0 & 0.25 \\
-0.25 & -0.25 & 0 & 0 \\
0.25 & 0.25 & 0 & 0
\end{bmatrix}, &
B& = \begin{bmatrix}
2 & -2 \\
-1 & 2 \\
2 & 0 \\
0 & -1
\end{bmatrix}, \\
C &= \begin{bmatrix}
2 & 1 & 1 & -2 \\
0 & -1 & -2 & -2
\end{bmatrix}, &
D &= \begin{bmatrix}
0 & 0 \\
0 & 0
\end{bmatrix}.
\end{align*}
\end{comment}
%
\begin{figure}[h]
  \centering
  %\vspace{-3mm}
  \includegraphics[width=\columnwidth]{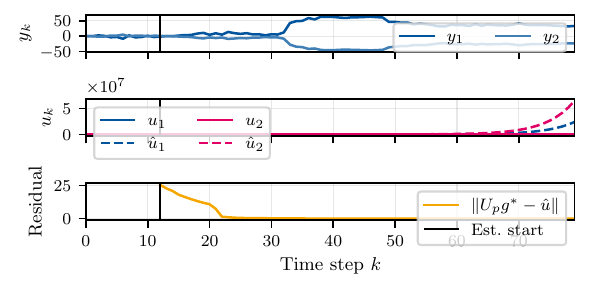}
  %\vspace{-8mm}
  \caption{With $N=10$, estimation diverges despite the algorithm finding a suitable $g$, due to an unstable invariant zero.}
  \label{fig:unstable_zero}
\end{figure}

\section{Conclusions and Future Work}
\label{sec:Concl}
In this paper, we established a rigorous connection between model-based and data-driven input reconstruction for MIMO systems, without requiring knowledge of the initial input trajectory.
The central result is that our proposed data-driven estimator inherits the same convergence conditions as its model-based counterpart: the estimation error decays to zero if and only if all invariant zeros are stable, while invariant zeros confined to the origin, in particular their absence, guarantee exact recovery in a single step.
As a byproduct, we obtained a necessary and sufficient condition for all invariant zeros being stable, verifiable purely from input-output data.
To ensure numerical tractability, we proposed an SVD-based nullspace reparametrization and demonstrated our theoretical findings across three distinct invariant-zero configurations through numerical examples.
Future research directions include extensions to process and measurement noise. 
%and a transfer-function-based approach to MIMO data-driven inversion, building on the SISO results in~\cite{lee_input-output_2025}.

\section*{Usage of Generative AI}
During the preparation of this work, the authors used Claude AI to reason about derivations, improve the syntax and grammar in the manuscript, and support code generation for the numerical examples. 
After using this tool, the authors reviewed and edited the content and take full responsibility for the publication's content.

\bibliographystyle{IEEEtran}
\bibliography{CDC26}

\end{document}